\newcommand{\myparagraph}[1]{\noindent{\textbf{#1}}}
\newcommand{\feas}[0]{{\sf Feas}}
\newcommand{\tuple}[1]{\ensuremath{\langle #1\rangle}}
\newcommand{\R}{\mbox{$\mathbb R$}}
\newcommand{\Qnn}{\mbox{$\mathbb Q_{\geq 0}$}}
\newcommand{\Qnnc}{\mbox{$\overline{\mathbb Q}_{\geq 0}$}}
\newcommand{\cal}[1]{\mathcal{#1}}
\newcommand{\opt}[0]{{\sf Opt}}
\newcommand{\mm}[2]{\langle #1, #2 \rangle}
\newcommand{\meet}{\wedge}
\newcommand{\join}{\vee}
\newcommand{\supp}{\mbox{\rm supp}}
\newcommand{\fhom}{\rightarrow_f}
\newcommand{\opm}[2]{{\cal O}^{(#2)}_{#1}}
\date{\today}
\author{Johan Thapper}
     \address{Laboratoire d'Informatique  (LIX), CNRS UMR 7161\\
     \'{E}cole Polytechnique \\91128 Palaiseau\\
     France}
     \email{thapper@lix.polytechnique.fr}
 \thanks{The research leading to these results has received funding from the European Research Council under the European Community's Seventh Framework Programme (FP7/2007-2013 Grant Agreement no.\ 257039)}
\author{Stanislav \v{Z}ivn\'{y}}
    \address{Department of Computer Science\\
             University of Oxford, Oxford\\
             UK}
    \email{standa.zivny@cs.ox.ac.uk}
    \thanks{Stanislav \v{Z}ivn\'y was supported by a Junior Research Fellowship at
    University College, Oxford.}
\title[The Power of Linear Programming for Valued CSPs]{The Power of Linear Programming for Valued CSPs}
\newtheorem{theorem}{Theorem}[section]
\newtheorem{lemma}[theorem]{Lemma}
\newtheorem{proposition}[theorem]{Proposition}
\newtheorem{corollary}[theorem]{Corollary}
\newtheorem{remark}{Remark}
\newtheorem{definition}{Definition}
\theoremstyle{remark}
\newtheorem{example}{Example}
\def\multiset#1#2{\big(\kern-.2em\big(\genfrac{}{}{0pt}{}{#1}{#2}\big)\kern-.2em\big)}
\def\mmultiset#1#2{\left(\kern-.3em\left(\genfrac{}{}{0pt}{}{#1}{#2}\right)\kern-.3em\right)}
\begin{document}

\begin{abstract}

A class of \emph{valued constraint satisfaction problems} (VCSPs) is
characterised by a \emph{valued constraint language}, a fixed set of cost
functions on a finite domain. An instance of the problem is specified by a sum
of cost functions from the language with the goal to minimise the sum. This
framework includes and generalises well-studied constraint satisfaction problems
(CSPs) and maximum constraint satisfaction problems (Max-CSPs).

Our main result is a precise algebraic characterisation of valued constraint
languages whose instances can be solved \emph{exactly} by the \emph{basic linear
programming relaxation}.
Using this result, we obtain tractability of several
novel and previously widely-open classes of VCSPs, including problems over
valued constraint languages that are: (1) submodular on \emph{arbitrary
lattices}; (2) bisubmodular (also known as $k$-submodular) on \emph{arbitrary
finite domains}; (3) weakly (and hence strongly) tree-submodular on
\emph{arbitrary trees}. 

\medskip
\noindent
{\bf Keywords}: valued constraint satisfaction, fractional polymorphisms,
fractional homomorphisms, submodularity, bisubmodularity, linear programming

\end{abstract}

\maketitle
\newpage

\section{Introduction}

The constraint satisfaction problem (CSP) provides a common framework for many
theoretical and practical problems in computer science. An instance can be
vaguely described as a set of variables to be assigned values from the domains
of the variables so that all constraints are
satisfied~\cite{Montanari74:constraints}. The CSP is NP-complete in general and
thus we are interested in restrictions which give rise to tractable classes of
problems. Following Feder \& Vardi~\cite{Feder98:monotone}, we restrict the
constraint language; that is, all constraint relations in a given instance must
belong to a fixed, finite set of relations on the domain. The most successful
approach to classifying language-restricted CSPs is the so-called algebraic
approach~\cite{Jeavons97:closure,Jeavons98:algebraic,Bulatov05:classifying},
which has led to several complexity
classifications~\cite{Bulatov06:3-elementJACM,Bulatov11:conservative,Barto09:siam,Barto11:lics}
and algorithmic characterisations~\cite{Barto09:focs,Idziak10:siam} going beyond
the seminal work of Schaefer~\cite{Schaefer78:complexity}.

Motivated by reasons both theoretical (optimisation problems are different from
decision problems) and practical (many problems are over-constrained and hence
have no solution, or under-constrained and hence have many solutions), we study
valued constraint satisfaction problems
(VCSPs)~\cite{Bistarelli97:semiring,Schiex95:valued}. A valued constraint
language is a finite set of cost functions on the domain, and a VCSP instance is
given by a weighted sum of cost functions from the language with the goal to
\emph{minimise} the sum. (CSPs correspond to the case when the range of all cost
functions is $\{0,\infty\}$, and Max-CSPs correspond to the case when the range
of all cost functions is $\{0,1\}$.\footnote{With respect to exact solvability,
Max-CSPs (``maximising the number of satisfied constraints'') are
polynomial-time equivalent to Min-CSPs (``minimising the number of unsatisfied
constraints''). Therefore, with respect to exact solvability, Max-CSPs are
polynomial-time equivalent to $\{0,1\}$-valued VCSPs.}) The VCSP framework is
very robust and has also been studied under different names such as Min-Sum
problems, Gibbs energy minimisation, Markov Random Fields (MRF), Conditional
Random Fields (CRF) and others in several different contexts in computer
science~\cite{Lauritzen96,Wainwright08,Crama11:book}. 

Given the generality of the VCSP, it is not surprising that only few complexity
classifications are known. In particular, only Boolean (on a 2-element domain)
languages~\cite{Cohen06:complexitysoft,cz11:cp-mwc} and conservative (containing
all $\{0,1\}$-valued unary cost functions) languages~\cite{kz12:soda} have been
completely classified with respect to exact solvability. On the algorithmic
side, most known tractable languages are somewhat related to submodular
functions on distributive
lattices~\cite{Cohen06:complexitysoft,Cohen08:Generalising,Jonsson11:cp,kz12:soda}.

An alternative approach for solving VCSPs is using linear programming (LP) and
semidefinite programming (SDP); these have been used mostly for
approximation~\cite{Raghavendra08:stoc,Kun12:itcs,Dalmau11:robust,Barto12:stoc}.

\smallskip

\myparagraph{Contribution} 
We study the power of the \emph{basic linear programming relaxation} (BLP). Our
main result (Theorem~\ref{thm:char}) is a precise characterisation of valued constraint languages for
which BLP is a decision procedure. In more detail, we characterise valued
constraint languages over which VCSP instances can be solved exactly by a
certain basic linear program. 
Equivalently, we show precisely when a particular
integer programming formulation of a VCSP has zero integrality gap.
The characterisation is algebraic in terms of \emph{fractional
polymorphisms}~\cite{Cohen06:expressive}.

Our work is the first link between solving VCSPs exactly using LP and the
algebraic machinery for VCSPs introduced by Cohen et al.
in~\cite{Cohen06:expressive,ccjz11:mfcs}. Part of the proof is inspired by the
characterisation of width-1 CSPs~\cite{Feder98:monotone,Dalmau99:set}. One of
the main technical contributions is a construction of totally symmetric
fractional polymorphisms of all arities (Theorem~\ref{thm:gentotot}). 

This result allows us to demonstrate that several valued constraint languages
are covered by our characterisation and thus are tractable; that is,
VCSP instances over these languages can be solved exactly using BLP. New
tractable languages include: (1) submodular languages on \emph{arbitrary
lattices}; (2) bisubmodular (also known as $k$-submodular) languages on
\emph{arbitrary finite domains}; (3) weakly (and hence strongly) tree-submodular
languages on \emph{arbitrary trees}. 
The complexity of (subclasses of) these languages has been mentioned explicitly
as open problems
in~\cite{DeinekoJKK08,Krokhin08:max,Kolmogorov11:mfcs,Huber12:ksub}.
More generally, we show that any valued constraint language with a binary
multimorphism in which at least one operation is a semi-lattice operation is
tractable (cf. Section~\ref{sec:trac}). Our results cover \emph{all} known
tractable finite-valued constraint languages.

\myparagraph{Related work}
Apart from identifying tractable classes of CSPs and VCSPs with respect to exact
solvability, the approximability of Max-CSPs has attracted a lot of
attention~\cite{Creignouetal:siam01,Khanna01:approximability,Jonsson09:tcs}.
Under the assumption of the \emph{unique games conjecture}~\cite{khot10:coco}, 
Raghavendra showed how to
approximate all Max-CSPs and finite-valued VCSPs
optimally~\cite{Raghavendra08:stoc}.\footnote{Note that Max-CSPs
(=$\{0,1\}$-valued VCSPs) and finite-valued VCSPs, respectively, are called CSPs
and Generalised CSPs (GCSPs), respectively, in~\cite{Raghavendra08:stoc}.}  
For VCSPs that are tractable, Raghavendra's algorithms provide a PTAS,
but it seems notoriously difficult to determine the approximation
ratios of these algorithms.
Very
recently, Max-CSPs that are \emph{robustly approximable} have been
characterised as those having \emph{bounded width}~\cite{Dalmau11:robust,Kun12:itcs,Barto12:stoc}.
Specifically, Kun et al.\
studies the question of which Weighted Max-CSPs\footnote{In Weighted
Max-CSPs, every constraint $f$ is $\{0,c_f\}$-valued, where $c_f$ is a positive
constant. Weighted Max-CPSs are a special case of VCSPs.} can be robustly
approximated using BLP~\cite{Kun12:itcs}. 
Their result is related but incomparable to ours as it applies to robust
approximability and not to exact solvability, except for the special case of
width-1 CSPs.
In particular, ``solving'' (``deciding'') for us means finding an optimum
solution to a VCSP instance, which is an optimisation problem, whereas
``solving'' in~\cite{Kun12:itcs} means (ignoring their results on robust
approximability, which do not apply here) the basic LP formulation of a CSP
instance finds a solution if one exists.\footnote{Note that CSPs are defined as
$\{0,1\}$-valued in~\cite{Kun12:itcs} and not as $\{0,\infty\}$-valued, as in
this paper. This is needed for the LP formulation and the measure of
approximability. After all,~\cite{Kun12:itcs} deals with Max-CSPs.}

We remark that our tractability results apply to the minimisation problem of
VCSP instances (i.e., the objective function is given by a sum of ``local'' cost
functions) but not to objective functions given by an oracle. In particular,
submodular functions given by an oracle can be minimised on distributive
lattices~\cite{Schrijver00:submodular,Iwata01:submodular},
diamonds~\cite{Kuivinen11:do-diamonds}, and several constructions on lattices
preserving tractability have been identified~\cite{Krokhin08:max}, but it is
widely open what happens on non-distributive lattices. Similarly, bisubmodular
functions given by an oracle can be minimised in polynomial-time on domains of
size 3~\cite{Fujishige06:bisubmodular}, but the complexity is open on domains of
larger size~\cite{Huber12:ksub}. It is known that strongly tree-submodular
functions given by an oracle can be minimised in polynomial time on binary
trees~\cite{Kolmogorov11:mfcs}, but the complexity is open on general
(non-binary) trees. Similarly, it is known that weakly tree-submodular functions
given by an oracle can be minimised in polynomial time on chains and
forks~\cite{Kolmogorov11:mfcs}, but the complexity on (even binary) trees is
open.

Extending the notion of (generalised) arc consistency for
CSPs~\cite{Mackworth77:consistency,Freuder78:synthesizing} and several
previously studied notions of arc consistencies for
VCSPs~\cite{Cooper04:aij-arc}, Cooper et al. introduced \emph{optimal soft arc
consistency} (OSAC)~\cite{Cooper10:osac}, which is a linear program relaxation
of a given VCSP instance.
Since OSAC is is a tighter relaxation than BLP (cf. Appendix~\ref{sec:osac}),
all tractable classes identified in this paper are solved by OSAC as well.
Similarly, since the basic SDP relaxation from~\cite{Raghavendra08:stoc} is
tighter than BLP, all tractable cases identified in this paper are solved by  it
as well.

\section{Preliminaries}
\label{sec:prelim}

The set of non-negative rational numbers is denoted by $\Qnn$.
A \emph{signature} $\tau$ is a set of \emph{function symbols} $f$, each with an
associated positive \emph{arity}, $ar(f)$. A \emph{valued $\tau$-structure} $A$
(also known as a \emph{valued constraint language}, or just a \emph{language}) 
consists of a \emph{domain} $D = D(A)$, together with a
function $f^A : D^{ar(f)} \rightarrow \Qnn$, for each function symbol $f \in
\tau$. (To be precise, these are \emph{finite-valued} structures. In
Section~\ref{sec:gener-valued}, we will extend $\Qnn$ with infinity.)

Let $A$ be a valued $\tau$-structure. An instance of VCSP$(A)$ is given by a valued $\tau$-structure $I$.
A solution to $I$ is a function $h : D(I) \rightarrow D(A)$,
its measure given by
\[
\sum_{f \in \tau, {\bar x} \in D(I)^{ar(f)}} f^I({\bar x}) f^A(h({\bar x})).
\]
The goal is to find a solution of minimum measure.
This measure will be denoted by $\opt_A(I)$.

For an $m$-tuple $\bar{t}$, we denote by $\{\bar{t}\}$ the set of elements in $\bar{t}$.
Furthermore, we denote by $[\bar{t}]$ the multiset of elements in $\bar{t}$.

\subsection{Fractional Homomorphisms}

Let $A$ and $B$ be valued structures over the same signature $\tau$.
Let $B^A$ denote the set of all functions from $D(A)$ to $D(B)$.
A \emph{fractional homomorphism} from $A$ to $B$ is a function $\omega : B^A \rightarrow \Qnn$, with $\sum_{g \in B^A} \omega(g) = 1$, such that for every function symbol $f \in \tau$ and tuple ${\bar a} \in D(A)^{ar(f)}$, it holds that
\[
\sum_{g \in B^A} \omega(g) f^B(g({\bar a})) \leq f^A({\bar a}),
\]
where the functions $g$ are applied component-wise.

We write $A \fhom B$ to indicate the existence of a fractional homomorphism.

\begin{proposition}\label{prop:frachom}
  Assume that $A \fhom B$.
  Then $\opt_A(I) \geq \opt_B(I)$, for every instance $I$.
\end{proposition}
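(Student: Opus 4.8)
The plan is to start from an optimal solution $h$ for VCSP$(A)$ and use the fractional homomorphism $\omega$ to manufacture, on average, an equally cheap or cheaper solution for VCSP$(B)$, by post-composing $h$ with the functions $g \in B^A$ carried by $\omega$. The underlying intuition is that $\omega$ is a probability distribution over maps $g : D(A) \to D(B)$, and the defining inequality says that in expectation each cost function does not increase when we push the $A$-values through $g$.

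Concretely, I would first fix an optimal solution $h : D(I) \to D(A)$ of VCSP$(A)$, so that
\[
\opt_A(I) = \sum_{f \in \tau,\ \bar{x} \in D(I)^{ar(f)}} f^I(\bar{x})\, f^A(h(\bar{x})).
\]
For each $g \in B^A$, the composite $g \circ h : D(I) \to D(B)$ is a feasible solution of VCSP$(B)$; since $g$ is applied component-wise, $g(h(\bar{x})) = (g \circ h)(\bar{x})$, and its measure is $M_B(g) := \sum_{f, \bar{x}} f^I(\bar{x})\, f^B(g(h(\bar{x})))$. Note that $B^A$ is finite (both domains are finite), so all the sums below are finite and no convergence issue arises.

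Next I would form the $\omega$-weighted average of these measures and interchange the two finite summations:
\[
\sum_{g \in B^A} \omega(g)\, M_B(g)
 = \sum_{f,\ \bar{x}} f^I(\bar{x}) \left( \sum_{g \in B^A} \omega(g)\, f^B(g(h(\bar{x}))) \right).
\]
Applying the defining inequality of the fractional homomorphism with $\bar{a} = h(\bar{x}) \in D(A)^{ar(f)}$ bounds the inner parenthesised sum by $f^A(h(\bar{x}))$; because each weight $f^I(\bar{x}) \in \Qnn$ is non-negative, this gives $\sum_{g} \omega(g)\, M_B(g) \leq \opt_A(I)$.

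Finally, since $\omega$ is a probability distribution and every composite $g \circ h$ is a feasible $B$-solution so that $M_B(g) \geq \opt_B(I)$, I would conclude
\[
\opt_B(I) = \sum_{g \in B^A} \omega(g)\, \opt_B(I) \leq \sum_{g \in B^A} \omega(g)\, M_B(g) \leq \opt_A(I).
\]
I do not expect a genuine obstacle: the proof is a one-shot averaging argument. The only points requiring care are getting the inequality in the right direction (we transport an optimal $A$-solution \emph{into} $B$, not the reverse), correctly identifying $g(h(\bar{x}))$ with the component-wise composition $(g \circ h)(\bar{x})$ so that the fractional-homomorphism inequality applies at $\bar{a} = h(\bar{x})$, and invoking the non-negativity of the instance weights $f^I(\bar{x})$ to preserve the inequality after weighting.
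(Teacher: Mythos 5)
Your proposal is correct and follows essentially the same argument as the paper: both exchange the two finite sums, apply the fractional-homomorphism inequality at $\bar{a} = h(\bar{x})$ using non-negativity of the weights $f^I(\bar{x})$, and conclude by an averaging step. The only cosmetic difference is that the paper runs the argument for an arbitrary solution $h$ and extracts a single good $g$ with $\mathrm{measure}_B(g \circ h) \leq \mathrm{measure}_A(h)$, whereas you fix an optimal $h$ and bound the weighted average below by $\opt_B(I)$ term by term; these are interchangeable views of the same averaging trick.
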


\begin{proof}
  Let $\omega$ be a fractional homomorphism from $A$ to $B$,
  let $X = D(I)$ and let $h : X \rightarrow A$ be an arbitrary solution. Then,
  \[
  \sum_{f, \bar{x}} f^I({\bar x}) f^A(h({\bar x})) \geq \sum_{f, \bar{x}} f^I({\bar x}) \sum_{g \in B^A} \omega(g) f^B(g(h({\bar x}))) = \sum_{g \in B^A} \omega(g) \sum_{f, \bar{x}} f^I({\bar x}) f^B(g(h({\bar x}))),
  \]
  where the sums are over $f \in \tau$ and $\bar{x} \in X^{ar(f)}$.
  Hence, there exists a $g \in B^A$ such that the measure of the solution $g
  \circ h$ to $I$ as an instance of VCSP$(B)$ is no greater than the measure of
  the solution $h$ to $I$ as an instance of VCSP$(A)$.
\end{proof}

\subsection{Fractional Polymorphisms}

Let $A$ be a valued $\tau$-structure, and let $D = D(A)$. An $m$-ary
\emph{operation} on $D$ is a function $g : D^m \rightarrow D$. Let $\opm{D}{m}$
denote the set of all $m$-ary operations on $D$. An $m$-ary \emph{fractional
operation} is a function $\omega : \opm{D}{m} \rightarrow \Qnn$. Define
$\|\omega\|_1 := \sum_{g} \omega(g)$. An $m$-ary fractional operation $\omega$
is called an $m$-ary \emph{fractional polymorphism}~\cite{Cohen06:expressive} if
$\|\omega\|_1 = 1$ and for every function symbol $f \in \tau$ and tuples ${\bar
a_1}, \dots, {\bar a_m} \in D^{ar(f)}$, it holds that
\[ \sum_{g \in \opm{D}{m}} \omega(g) f^A(g({\bar a_1},\dots,{\bar a_m})) \leq
\frac{1}{m} \sum_{i=1}^m f^{A}({\bar a_i}). \]
The set $\{ g \mid \omega(g) > 0 \}$ of operations is called the \emph{support}
of $\omega$ and is denoted by $\supp(\omega)$.
Let $S_m$ be the symmetric group on $\{1,\dots,m\}$. An $m$-ary operation $g$ is
\emph{symmetric}\footnote{Symmetric operations are called totally
symmetric in~\cite{Kun12:itcs}.} if for every permutation $\pi \in S_m$, we
have
$ g(x_1,\dots,x_m) = g(x_{\pi(1)},\dots,x_{\pi(m)}). $

\begin{definition}\label{def:fracpolym}
  A \emph{totally symmetric fractional polymorphism} $\omega$ is a
  fractional polymorphism such that if $g \in \supp(\omega)$, 
  then $g$ is \emph{symmetric}.
\end{definition}

The \emph{superposition} of an $n$-ary operation $h$ with $n$ $m$-ary operations $g_1, \dots, g_n$ is the $m$-ary operation defined by $h[g_1,\dots,g_n](x_1,\dots,x_m) = h(g_1(x_1,\dots,x_m),\dots,g_n(x_1,\dots,x_m))$.
A set of operations %
is called a \emph{clone} if it contains all projections and 
is closed under superposition.
The smallest clone that contains a set of operations ${\cal F}$ is called the clone \emph{generated} by ${\cal F}$.
We say that an operation $f$ is \emph{generated} by ${\cal F}$ if it is contained in the clone generated by ${\cal F}$.

\begin{definition}
  The \emph{superposition}, $\omega[g_1,\dots,g_n]$, of an $n$-ary fractional
  polymorphism $\omega$ with $n$ $m$-ary operations $g_1,\dots,g_n$ is the $m$-ary
 fractional operation $\omega'$, where
  \[
  \omega'(h') = \sum_{h : h' = h[g_1,\dots,g_n]} \omega(h).
  \]
\end{definition}

Note that in general $\omega'$ is not a fractional polymorphism, but it does satisfy the following inequality:
\begin{eqnarray*}
\sum_{h' \in \opm{D}{m}} \omega'(h') f^A(h'({\bar a}_1,\dots,{\bar a}_m))
& = & \sum_{h \in \opm{D}{n}} \omega(h) f^A(h[g_1,\dots,g_n]({\bar a}_1,\dots,{\bar a}_m))\\
& \leq & 
\frac{1}{n} \sum_{i=1}^n f^A(g_i({\bar a}_1,\dots,{\bar a}_m)),
\end{eqnarray*}
for every $f \in \tau$ and ${\bar a}_1, \dots {\bar a}_m \in D^{ar(f)}$.

\subsection{The Multiset-Structure $P^m(A)$}

Let $A$ be a valued $\tau$-structure, $D = D(A)$, and let $m \geq 1$.
We define the \emph{multiset-structure}\footnote{A similar structure for $\{0,\infty\}$-valued languages was introduced in~\cite{Kun12:itcs}.} $P^m(A)$ as the valued structure with
domain $\multiset{D}{m}$, where $\multiset{D}{m}$ denotes the multisets of
elements from $D$ of size $m$, and for every $k$-ary function symbol $f \in \tau$, and $\alpha_1,\dots,\alpha_k \in \multiset{D}{m}$,
\[
f^{P^m(A)}(\alpha_1, \dots, \alpha_k) = \frac{1}{m} \min_{{\bar t_i} \in D^m : [{\bar t_i}] = \alpha_i} \sum_{i=1}^m f^{A}({\bar t_1}[i],\dots,{\bar t_k}[i]).
\]

The following lemma follows from the definitions
(the proof is in the appendix).

\begin{lemma} \label{lem:totsym-multiset}
  Let $A$ be a valued structure and $m > 1$.
  Then $P^m(A) \fhom A$ if and only if
  $A$ has an $m$-ary totally symmetric fractional polymorphism. 
\end{lemma}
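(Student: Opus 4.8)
The plan is to exploit the natural bijection between symmetric $m$-ary operations on $D$ and functions from $\multiset{D}{m}$ to $D = D(A)$, i.e.\ between the operations that may appear in $\supp$ of a totally symmetric fractional polymorphism and the maps in $A^{P^m(A)}$ underlying a fractional homomorphism. Explicitly, a symmetric operation $h \in \opm{D}{m}$ depends only on the multiset of its arguments, so it factors as $h(x_1,\dots,x_m) = g_h([x_1,\dots,x_m])$ for a unique $g_h : \multiset{D}{m} \to D$; conversely every such $g$ lifts to the symmetric operation $h_g(x_1,\dots,x_m) = g([x_1,\dots,x_m])$. This $h \mapsto g_h$ is the dictionary translating the two statements into each other, and transporting a fractional operation across it preserves the total weight $\sum \omega(\cdot)$. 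The whole proof then reduces to checking that the two defining inequalities match up under this translation.

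For the ($\Leftarrow$) direction, given a totally symmetric $m$-ary fractional polymorphism $\omega'$ I would set $\omega(g) := \omega'(h)$ whenever $g = g_h$; since $\supp(\omega')$ consists of symmetric operations this is well-defined, is supported on $A^{P^m(A)}$, and has total weight $1$. To verify the fractional-homomorphism inequality, fix $f$ of arity $k$ and multisets $\alpha_1,\dots,\alpha_k$, and pick arbitrary representative tuples $\bar{t}_j \in D^m$ with $[\bar{t}_j] = \alpha_j$. Reading the $m$ columns as tuples $\bar{a}_i = (\bar{t}_1[i],\dots,\bar{t}_k[i]) \in D^k$ and plugging them into the polymorphism inequality for $\omega'$, symmetry collapses $h(\bar{a}_1,\dots,\bar{a}_m)$ to $(g_h(\alpha_1),\dots,g_h(\alpha_k))$ coordinatewise, so the left-hand side becomes exactly $\sum_g \omega(g)\,f^A(g(\alpha_1),\dots,g(\alpha_k))$, while the right-hand side is $\frac1m\sum_i f^A(\bar{t}_1[i],\dots,\bar{t}_k[i])$. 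The crucial point is that the left-hand side does not depend on the choice of representatives, so taking the minimum over all $\bar{t}_j$ on the right gives precisely $f^{P^m(A)}(\alpha_1,\dots,\alpha_k)$, as required.

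For the ($\Rightarrow$) direction I would run the same dictionary backwards: from a fractional homomorphism $\omega$ define $\omega'(h_g) := \omega(g)$, which is totally symmetric by construction and again has total weight $1$. Given $f$ and tuples $\bar{a}_1,\dots,\bar{a}_m \in D^k$, I form the ``column'' multisets $\alpha_j := [\bar{a}_1[j],\dots,\bar{a}_m[j]]$, apply the fractional-homomorphism inequality at $(\alpha_1,\dots,\alpha_k)$, and note that the particular representatives $\bar{t}_j[i] := \bar{a}_i[j]$ satisfy $[\bar{t}_j] = \alpha_j$; hence the minimum defining $f^{P^m(A)}(\alpha_1,\dots,\alpha_k)$ is at most $\frac1m\sum_i f^A(\bar{a}_i)$. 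Combined with the identity $(g(\alpha_1),\dots,g(\alpha_k)) = h_g(\bar{a}_1,\dots,\bar{a}_m)$, this is exactly the fractional-polymorphism inequality for $\omega'$.

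The only genuinely delicate point is the bookkeeping around the minimum in the definition of $P^m(A)$, since the two directions use it oppositely. In ($\Leftarrow$) I need the inequality to hold for \emph{every} representative so that I may pass to the minimum, which works only because the left-hand side is representative-independent (a consequence of the symmetry of the support); in ($\Rightarrow$) I only need the minimum to be bounded above by one specific ``column'' representative, which is immediate. Beyond this, everything is a routine transcription through the operation/multiset-map bijection, with the factors $\frac1m$ matching on both sides.
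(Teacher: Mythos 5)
Your proof is correct and follows essentially the same route as the paper's: both rest on the bijection between symmetric $m$-ary operations and maps $\multiset{D}{m} \rightarrow D$, transporting weights across it and matching the two defining inequalities (the paper's $(\Leftarrow)$ just picks the minimising representatives upfront, where you take arbitrary ones and pass to the minimum via representative-independence — an equivalent bookkeeping choice). Your $(\Rightarrow)$ direction spells out the verification that the paper compresses into one sentence, but the underlying argument is identical.
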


\section{Basic Linear Programming Relaxation}\label{sec:blp}

Let $I$ and $A$ be valued structures over a common finite signature $\tau$.
Let $X = D(I)$ and $D = D(A)$.
The \emph{basic LP relaxation} (BLP) (sometimes also called the \emph{standard},
or \emph{canonical LP relaxation}) has variables $\lambda_{f,{\bar x},\sigma}$
for $f \in \tau$, ${\bar x} \in X^{ar(f)}$, $\sigma : \{{\bar x}\} \rightarrow
D$; and variables $\mu_{x}(a)$ for $x \in X, a \in D$.

\begin{equation}\label{eq:basiclp}
  \begin{array}{lll}
    \min
& \multicolumn{2}{l}{\displaystyle \sum_{f,{\bar x}} \sum_{\sigma : \{{\bar x}\} \rightarrow D} f^I({\bar x}) f^A(\sigma({\bar x})) \lambda_{f,{\bar x},\sigma}} \\
    \text{s.t.}
& \displaystyle\sum_{\sigma : \sigma(x) = a} \lambda_{f,{\bar x},\sigma} = \mu_{x}(a) 
& \qquad \text{ $\forall f \in \tau, {\bar x} \in X^{ar(f)}, x \in \{{\bar x}\}, a \in D$} \\
& \hspace*{0.8em} \displaystyle\sum_{a \in D} \mu_{x}(a) = 1 
& \qquad \text{ $\forall x \in X$} \\
\smallskip & 
\quad 0 \leq \lambda, \mu \leq 1 & \\
  \end{array}
\end{equation}

For any fixed $A$, BLP is polynomial in the size of a given
VCSP$(A)$ instance.
Let IP be the program obtained from~(\ref{eq:basiclp}) together with the
constraints that all variables take values in the range $\{0,1\}$ rather than
$[0,1]$. 
This is an integer programming formulation of the original VCSP instance. The
interpretation of the variables in IP is as follows: $\mu_{x}(a)=1$ iff variable
$x$ is assigned value $a$; $\lambda_{f,{\bar x},\sigma}=1$ iff constraint $f$ on
scope ${\bar x}$ is assigned tuple $\sigma({\bar x})$.
LP~(\ref{eq:basiclp}) is now a relaxation of IP and the question of whether
(\ref{eq:basiclp}) solves a given VCSP instance $I$ is the question of whether
IP has a zero integrality gap.
\section{Characterisation}\label{sec:char}

\begin{definition}
  Let BLP$(I,A)$ denote the optimum of $(\ref{eq:basiclp})$.
  We say that BLP \emph{solves} VCSP$(A)$ if BLP$(I,A) = \opt_A(I)$ for every instance $I$ of VCSP$(A)$.
\end{definition}

Solving the BLP provides an optimum value of the VCSP. 
To obtain an assignment achieving this value, we apply self-reduction:
Successively try each possible value for a variable and solve the altered LP.
Once the new optimum matches the original one, proceed with the next variable.

\begin{theorem}[Main]\label{thm:char}
  Let $A$ be a valued structure over a finite signature.
  TFAE:
  \begin{enumerate}[(i)]
  \item\label{main:1}
    BLP solves VCSP$(A)$.
  \item\label{main:2}
    For every $m>1$, $P^m(A) \fhom A$.
  \item\label{main:3}
    For every $m>1$, $A$ has an $m$-ary totally symmetric fractional
    polymorphism.
  \item\label{main:4}
    For every $n>1$, $A$ has a fractional polymorphism $\omega_n$ such that
    $\supp(\omega_n)$
    generates an $n$-ary symmetric operation.
  \end{enumerate}
\end{theorem}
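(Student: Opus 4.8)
The equivalence of \ref{main:2} and \ref{main:3} is precisely Lemma~\ref{lem:totsym-multiset}, so I would take it for granted and spend the effort closing the loop through \ref{main:2}$\Rightarrow$\ref{main:1}$\Rightarrow$\ref{main:2} together with the two implications \ref{main:3}$\Rightarrow$\ref{main:4} and \ref{main:4}$\Rightarrow$\ref{main:3}. The guiding idea throughout is that the optimum of $(\ref{eq:basiclp})$ is sandwiched as $\opt_{P^m(A)}(I) \le \mathrm{BLP}(I,A) \le \opt_A(I)$, where the right inequality holds because $(\ref{eq:basiclp})$ is a relaxation, and the left one comes from reading an optimal fractional solution as a genuine solution over the multiset-structure. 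A fractional homomorphism $P^m(A)\fhom A$ then collapses this sandwich to an equality.

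For \ref{main:2}$\Rightarrow$\ref{main:1}, let $I$ be arbitrary and let $(\mu,\lambda)$ be an optimal, hence rational, solution of $(\ref{eq:basiclp})$; let $m$ be a common denominator. Map each $x\in D(I)$ to the multiset $h(x)\in\multiset{D}{m}$ in which $a$ occurs $m\,\mu_{x}(a)$ times; this is well defined since $\sum_a\mu_x(a)=1$. For a constraint $(f,\bar x)$ the numbers $m\,\lambda_{f,\bar x,\sigma}$ form a size-$m$ multiset of assignments $\sigma\colon\{\bar x\}\to D$, and the marginalisation constraints guarantee that, coordinate by coordinate, these realise exactly the multisets $h(x_j)$. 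Reading the $m$ assignments as rows gives a coordinate-consistent alignment, so $f^{P^m(A)}(h(\bar x))\le\sum_\sigma\lambda_{f,\bar x,\sigma}f^A(\sigma(\bar x))$; summing over constraints yields $\opt_{P^m(A)}(I)\le\mathrm{BLP}(I,A)$. Combining this with Proposition~\ref{prop:frachom} applied to \ref{main:2} gives $\opt_A(I)\le\opt_{P^m(A)}(I)\le\mathrm{BLP}(I,A)\le\opt_A(I)$, so all four quantities coincide, which is \ref{main:1}.

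For the converse \ref{main:1}$\Rightarrow$\ref{main:2} I would argue contrapositively and first prove a completeness counterpart to Proposition~\ref{prop:frachom} by LP duality. If $P^m(A)\not\fhom A$, then Farkas' lemma applied to the feasibility program defining a fractional homomorphism produces a nonnegative weighting $y_{f,\bar\alpha}$ of the pairs $(f,\bar\alpha)$ such that, for every $g\in A^{P^m(A)}$, $\sum_{f,\bar\alpha}y_{f,\bar\alpha}f^A(g(\bar\alpha))>\sum_{f,\bar\alpha}y_{f,\bar\alpha}f^{P^m(A)}(\bar\alpha)$. Taking $J$ with $D(J)=\multiset{D}{m}$ and $f^J:=y_{f,\cdot}$, the left side minimised over $g$ is $\opt_A(J)$, while the right side bounds $\opt_{P^m(A)}(J)$ from above (it is the measure of the identity solution), so $\opt_{P^m(A)}(J)<\opt_A(J)$. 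To reach a contradiction with \ref{main:1} it then suffices to have $\mathrm{BLP}(J,A)\le\opt_{P^m(A)}(J)$. This converse inequality is the delicate point: the $\lambda$-variables can only witness coordinate-consistent alignments, whereas $f^{P^m(A)}$ is defined by a free per-coordinate minimisation, and the two disagree exactly when a scope repeats a variable. I would handle this by forcing $J$ to be \emph{simple} (no repeated variable in any scope), reducing the repeated-scope inequalities at level $m$ to distinct-scope inequalities at a larger level through a subdivision of $m$; on simple instances consistent and free alignments coincide, so $\mathrm{BLP}(J,A)=\opt_{P^m(A)}(J)$ and BLP fails on $J$.

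Finally, \ref{main:3}$\Rightarrow$\ref{main:4} is immediate: the $n$-ary totally symmetric fractional polymorphism supplied by \ref{main:3} has support consisting of symmetric operations, each of which is already an $n$-ary symmetric operation lying in the clone it generates. The reverse implication \ref{main:4}$\Rightarrow$\ref{main:3} is the genuine algebraic content, namely Theorem~\ref{thm:gentotot}. Here one is given, for every $n$, a fractional polymorphism $\omega_n$ whose support generates some $n$-ary symmetric operation $s_n$, written as a term $s_n=h[g_1,\dots]$ over $\supp(\omega_n)$, and must manufacture a genuinely totally symmetric fractional polymorphism of each arity $m$. The tool is the superposition inequality recorded before Section~\ref{sec:blp}: superposing $\omega_n$ along the term realising $s_n$ produces a fractional operation that still obeys the averaged bound but whose weight has been pushed toward symmetric operations, and iterating this across arities should concentrate all weight onto symmetric operations while preserving the polymorphism inequality. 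Controlling the inequality and total symmetry simultaneously, and doing so for \emph{every} $m$ from the data of \ref{main:4}, is exactly where I expect the main obstacle to lie.
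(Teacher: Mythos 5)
Your decomposition is exactly the paper's: (\ref{main:2})$\Rightarrow$(\ref{main:1}) by rounding an optimal LP solution into a $P^m(A)$-solution and invoking Proposition~\ref{prop:frachom}; (\ref{main:1})$\Rightarrow$(\ref{main:2}) by Farkas; (\ref{main:2})$\Leftrightarrow$(\ref{main:3}) by Lemma~\ref{lem:totsym-multiset}; and (\ref{main:4})$\Rightarrow$(\ref{main:3}) as the algebraic core. Your first implication is correct and is essentially Theorem~\ref{thm:forward}. You also deserve real credit for isolating the repeated-scope subtlety in (\ref{main:1})$\Rightarrow$(\ref{main:2}): it is genuine, and in fact the paper's own appendix proof glosses over it. There, the claim that the constructed $\lambda$ has marginals $\mu$ silently assumes that the tuples $\bar t_1,\dots,\bar t_k$ attaining $f^{P^m(A)}(\bar\alpha)$ can be read as $m$ row-consistent assignments. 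Take $D=\{0,1\}$, a single $f$ with $f^A(x,y)=1$ iff $x=y$, and $\alpha=\{0,1\}$: then $f^{P^2(A)}(\alpha,\alpha)=0$ is attained only by the inconsistent orderings $(0,1),(1,0)$, every LP-feasible $\lambda$ on the scope $(\alpha,\alpha)$ pays $1$, and at $m=2$ every Farkas certificate must charge such repeated scopes (the distinct-scope subsystem is satisfiable), so the certificate instance has $\mathrm{BLP}=\opt$ and witnesses nothing.

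This is precisely where your proposal has its gap. Your repair is one clause: make $J$ simple by ``reducing the repeated-scope inequalities at level $m$ to distinct-scope inequalities at a larger level through a subdivision of $m$.'' That reduction is a substantive claim --- infeasibility of the full system at level $m$ must force infeasibility of some \emph{distinct-scope} system at some level $m'$ --- and you give no argument for it. Nor is it routine: the right-hand side of a repeated-scope inequality is a free per-coordinate minimum that no single constraint on one copy of $\alpha$ can express, and unfolding a repeated scope into fresh copies changes $\opt$ as well as BLP (for $f^A(x,y)=[y\neq x+1 \bmod 3]$ on $D=\{0,1,2\}$, a triangle of copies costs $0$ while every diagonal assignment costs $1$). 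In the Max-Cut example your strategy does succeed at level $4$ (the multisets with one, two and three zeros give an infeasible distinct-scope subsystem), but nothing you wrote establishes this in general, and the whole implication rests on it.

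The second gap is (\ref{main:4})$\Rightarrow$(\ref{main:3}). You defer to Theorem~\ref{thm:gentotot} and sketch its first half (superpose $\omega_n$ along the term realising the symmetric operation, iterate), but you explicitly leave open how to concentrate \emph{all} weight on symmetric operations while preserving the inequality --- and that is the actual content of the theorem. Superposition only ever places a symmetric operation into the support with small weight; the non-symmetric mass does not go away by iterating. The paper's proof needs the full tree construction: weight-symmetric fractional operations at the nodes, invariants (a)--(d), an expansion step injecting $t[g_1,\dots,g_n]$, and crucially a \emph{pruning} step that exploits finiteness of the number of equivalence classes of $m$-ary operations to telescope away covered non-symmetric leaves. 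Without that mechanism (or a substitute), this implication, and hence the theorem, remains unproven in your write-up.
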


The rest of this section is devoted to proving Theorem~\ref{thm:char}. 
We start with proving $(\ref{main:2})\Rightarrow(\ref{main:1})$.

\begin{theorem}\label{thm:forward}
  Assume that $P^m(A) \fhom A$ for every $m > 1$.
  Then BLP solves VCSP$(A)$.
\end{theorem}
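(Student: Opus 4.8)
The plan is to prove the two inequalities $\mathrm{BLP}(I,A) \le \opt_A(I)$ and $\mathrm{BLP}(I,A) \ge \opt_A(I)$ separately. The first is immediate: LP~(\ref{eq:basiclp}) is a relaxation of the integer program IP, and IP is an exact reformulation of VCSP$(A)$, so the LP optimum can only be smaller. The content of the theorem is the reverse inequality, and the idea is to manufacture, out of an optimal LP solution, a genuine solution of $I$ regarded as an instance of VCSP$(P^m(A))$ whose measure is at most the LP value; Proposition~\ref{prop:frachom} together with the hypothesis $P^m(A) \fhom A$ will then transfer this bound back to VCSP$(A)$.

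First I would fix an optimal \emph{rational} solution $(\lambda,\mu)$ of~(\ref{eq:basiclp}) (a vertex of the polytope) and let $m$ be a common denominator of all its entries, chosen with $m > 1$; if the optimum is attained integrally there is nothing to prove, since an integral solution directly exhibits an assignment of measure $\opt_A(I)$. Then every $m\mu_{x}(a)$ and every $m\lambda_{f,{\bar x},\sigma}$ is a nonnegative integer. For each variable $x \in X = D(I)$ I define a multiset $\alpha_x \in \multiset{D}{m}$ in which each $a \in D$ occurs exactly $m\mu_{x}(a)$ times; the constraint $\sum_{a} \mu_{x}(a) = 1$ guarantees that $\alpha_x$ has size $m$. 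This yields a candidate solution $h : X \to \multiset{D}{m} = D(P^m(A))$ given by $h(x) = \alpha_x$.

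The crux is to bound $f^{P^m(A)}(\alpha_{x_1},\dots,\alpha_{x_k})$ for each constraint with scope ${\bar x} = (x_1,\dots,x_k)$. Here I would read the LP variables $\lambda_{f,{\bar x},\sigma}$ as a coupling of the marginals over the $m$ ``copies'': since $\sum_{\sigma : \sigma(x) = a} \lambda_{f,{\bar x},\sigma} = \mu_{x}(a)$, I can distribute the $m$ copies among the assignments $\sigma$ so that exactly $m\lambda_{f,{\bar x},\sigma}$ copies receive $\sigma$, and this simultaneously realises each $\alpha_{x_j}$ as an ordered tuple ${\bar t}_j$. With these particular orderings the minimum defining $f^{P^m(A)}$ is bounded above, giving
\[
f^{P^m(A)}(\alpha_{x_1},\dots,\alpha_{x_k}) \le \sum_{\sigma : \{{\bar x}\} \to D} \lambda_{f,{\bar x},\sigma}\, f^A(\sigma({\bar x})).
\]
Summing against the weights $f^I({\bar x})$ shows that the measure of $h$ as a solution of $I$ in VCSP$(P^m(A))$ is at most BLP$(I,A)$. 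Finally, $\opt_A(I) \le \opt_{P^m(A)}(I)$ by Proposition~\ref{prop:frachom} applied to the hypothesised fractional homomorphism $P^m(A) \fhom A$, while $\opt_{P^m(A)}(I)$ is at most the measure of the particular solution $h$; chaining these inequalities gives $\opt_A(I) \le \mathrm{BLP}(I,A)$, as required.

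I expect the main obstacle to be the coupling step: verifying that the integers $m\lambda_{f,{\bar x},\sigma}$ can be packed consistently into orderings ${\bar t}_1,\dots,{\bar t}_k$ of the multisets $\alpha_{x_1},\dots,\alpha_{x_k}$, i.e.\ that the per-coordinate counts $m\mu_{x_j}(a)$ agree with $\alpha_{x_j}$ at every scope position simultaneously. This is exactly where the linking constraints of the BLP (that the $\lambda$'s share the common marginals $\mu$) are used, and it is the combinatorial heart of the argument.
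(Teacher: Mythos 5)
Your proposal is correct and follows essentially the same route as the paper's proof: round an optimal rational LP solution to a multiset assignment $h(x)=\alpha_x$ in $P^m(A)$, use the BLP marginal constraints to couple the $\lambda$-weights into orderings of the multisets (bounding $f^{P^m(A)}$ by the per-constraint LP value), and then chain $\opt_A(I) \le \opt_{P^m(A)}(I) \le \mathrm{BLP}(I,A) \le \opt_A(I)$ via Proposition~\ref{prop:frachom}. The coupling step you flag as the ``combinatorial heart'' is exactly the paper's construction of the tuples $\bar a_1,\dots,\bar a_M$ and their coordinate rearrangement, and it goes through precisely as you describe.
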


\begin{proof}
  Let $\lambda^*, \mu^*$ be an optimal solution to (\ref{eq:basiclp}).
  Let $M$ be a positive integer such that $M \cdot \lambda^*$ and $M \cdot \mu^*$ are both integral.

  Let $\nu : X \rightarrow \multiset{D}{M}$ be defined by mapping $x$ to the multiset in which the elements are distributed according to $\mu^*_x$, i.e., the number of occurrences of $a$ in $\nu(x)$ is equal to $M \cdot \mu^*_{x}(a)$ for each $a \in D$. 
  
  Let $f$ be a $k$-ary function symbol in $\tau$, and let
  \[
  \sum_{{\bar x}, \sigma : \{{\bar x}\} \rightarrow D} f^I({\bar x}) f^A(\sigma({\bar x})) \lambda^*_{f,{\bar x},\sigma}
  = \sum_{{\bar x}} f^I({\bar x}) 
  \Big( 
  \sum_{\sigma : \{\bar{x}\} \rightarrow D} \lambda^*_{f,{\bar x},\sigma} f^A(\sigma({\bar x}))
  \Big)
  \]
  be the sum of all terms of the objective function in which $f$ occurs;
  Now, write
  \[
  M \cdot \sum_{\sigma : \{\bar{x}\} \rightarrow D} \lambda^*_{f,{\bar x},\sigma} f^A(\sigma({\bar x}))
  = f^A({\bar a_1}) + \dots + f^A({\bar a_M}),
  \]
  where the ${\bar a_i} \in D^k$ are such that a $\lambda^*_{f,{\bar x},\sigma}$-fraction are equal to $\sigma({\bar x})$. 

  Let ${\bar a_i}' = ({\bar a_1}[i],\dots,{\bar a_M}[i])$ for $i = 1, \dots, k$.
  \begin{eqnarray*}
  \sum_{\sigma : \{\bar{x}\} \rightarrow D} \lambda^*_{f,\bar{x},\sigma} f^A(\sigma({\bar x}))
  & = & \frac{1}{M} \sum_{i=1}^{M} f^A({\bar a_i}) 
  \ =\ \frac{1}{M} \sum_{i=1}^{M} f^A({\bar a_1}'[i],\dots,{\bar a_k}'[i]) \\
  & \geq &
  \frac{1}{M} \min_{{\bar t_i} \in D^M : [{\bar t_i}] = [{\bar a_i}']} \sum_{i=1}^M f^A({\bar t_1}[i],\dots,{\bar t_k}[i]) 
  \ =\ f^{P^M(A)}(\nu({\bar x})),
  \end{eqnarray*}
  where the last equality follows as the number of $a$'s in 
  $\bar{a}'_i$ is $M \cdot \sum_{\sigma : \sigma(\bar{x}[i]) = a} \lambda^*_{f,\bar{x},\sigma} = M \cdot \mu^*_{\bar{x}[i]}(a)$.

  We now have
  \begin{eqnarray*}
    BLP(I,A) & = &
    \sum_{f, {\bar x}} \sum_{\sigma : \{{\bar x}\} \rightarrow D} f^I({\bar x}) f^A(\sigma({\bar x})) \lambda^*_{f,{\bar x},\sigma} \\
    & = &
    \sum_{f \in \tau, {\bar x}} f^I({\bar x}) 
    \Big( 
    \sum_{\sigma : \{\bar{x}\} \rightarrow D} \lambda^*_{f,{\bar x},\sigma} f^A(\sigma({\bar x}))
    \Big) \\
    & \geq &
    \sum_{f \in \tau, {\bar x}} f^I({\bar x}) f^{P^M(A)}(\nu({\bar x})) \\
    & = & \opt_{P^M(A)}(I)
  \end{eqnarray*}
  
  It follows that $\opt_A(I) \geq BLP(I,A) \geq \opt_{P^M(A)}(I)$.
  Since $P^M(A) \fhom A$,
  the result then follows from Proposition~\ref{prop:frachom}.
\end{proof}

To prove $(\ref{main:1})\Rightarrow(\ref{main:2})$, we express the existence of
a fractional homomorphism $P^m(A) \fhom A$ as a system of linear inequalities.
We then apply a variant of Farkas' Lemma to show that if for some $m > 1$ there
is no such fractional homomorphism, then there exists an instance $I$ of
VCSP$(A)$ with a strictly greater optimum than BLP$(I,A)$ (the proof is in the
appendix).

\begin{theorem}\label{thm:back}
  Let $A$ be a valued structure and assume that BLP solves VCSP$(A)$.
  Then $P^m(A) \fhom A$ for every $m > 1$.
\end{theorem}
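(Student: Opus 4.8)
The plan is to establish the contrapositive by linear programming duality, exactly along the lines sketched just before the statement. Fix $m>1$ and view the existence of a fractional homomorphism $P^m(A)\fhom A$ as the feasibility of a single finite linear program: the unknowns are the rationals $\omega(g)\ge 0$, one for each of the finitely many operations $g:\multiset{D}{m}\rightarrow D$, subject to $\sum_g\omega(g)=1$ and, for every $f\in\tau$ and every tuple $\bar\alpha\in(\multiset{D}{m})^{ar(f)}$, the inequality $\sum_g\omega(g)\,f^A(g(\bar\alpha))\le f^{P^m(A)}(\bar\alpha)$. All data are rational, so if this system is infeasible a theorem of the alternative (a suitable form of Farkas' Lemma applied to the associated minimax LP) yields a rational weighting $y_{f,\bar\alpha}\ge 0$ of the constraints such that
\[
\min_{g:\multiset{D}{m}\rightarrow D}\ \sum_{f,\bar\alpha} y_{f,\bar\alpha}\,f^A(g(\bar\alpha))\ >\ \sum_{f,\bar\alpha} y_{f,\bar\alpha}\,f^{P^m(A)}(\bar\alpha).
\]

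From this certificate I would read off a VCSP$(A)$ instance $I$ that separates $\opt_A$ from BLP. Take $D(I)=\multiset{D}{m}$ and, for each $f\in\tau$ and $\bar\alpha$, set $f^I(\bar\alpha)=y_{f,\bar\alpha}$ (clearing denominators to make the weights integral if desired). A solution $h:\multiset{D}{m}\rightarrow D$ is precisely an operation $g$ appearing in the displayed minimum, and its measure is $\sum_{f,\bar\alpha} y_{f,\bar\alpha}\,f^A(h(\bar\alpha))$; hence $\opt_A(I)$ equals the left-hand side above. It therefore suffices to show that $BLP(I,A)$ is at most the right-hand side, for then $BLP(I,A)<\opt_A(I)$ and BLP does not solve VCSP$(A)$, contradicting the hypothesis.

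To bound $BLP(I,A)$ from above I would exhibit a feasible point of~(\ref{eq:basiclp}). Set $\mu_{\alpha}(a)=\tfrac1m\cdot(\text{multiplicity of }a\text{ in }\alpha)$, which is a probability distribution on $D$ because $|\alpha|=m$. For each constraint $(f,\bar\alpha)$ choose tuples $\bar t_1,\dots,\bar t_k$ realising the minimum defining $f^{P^m(A)}(\bar\alpha)$, and spread $\lambda_{f,\bar\alpha,\cdot}$ uniformly (weight $\tfrac1m$ each) over the partial assignments induced by the coordinates $j\in\{1,\dots,m\}$, namely $\alpha_i\mapsto\bar t_i[j]$. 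The multiset of values that these coordinates place on $\alpha_i$ is exactly $\alpha_i$, so the resulting $\lambda$ is consistent with $\mu$ in the marginal constraints, and the objective contributed by $(f,\bar\alpha)$ is $\tfrac1m\sum_j f^A(\bar t_1[j],\dots,\bar t_k[j])=f^{P^m(A)}(\bar\alpha)$. Summing over constraints gives $BLP(I,A)\le\sum_{f,\bar\alpha} y_{f,\bar\alpha} f^{P^m(A)}(\bar\alpha)$, as required.

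The delicate point, and the step I expect to be the main obstacle, is precisely this construction of the $\lambda$ variables: the BLP assigns a single value $\sigma(\alpha)$ to each \emph{distinct} element of the scope $\{\bar\alpha\}$, whereas the minimum defining $f^{P^m(A)}$ ranges over orderings $\bar t_i$ chosen independently. When a multiset occurs more than once in a scope, the coordinate-wise partial assignments need not descend to well-defined maps $\sigma:\{\bar\alpha\}\rightarrow D$, and a naive $\lambda$ would only reach the ``diagonal'' value, which can strictly exceed $f^{P^m(A)}(\bar\alpha)$. I would address this by arranging the separating instance so that each constraint sits on a scope of distinct variables---for instance by working with the equivalent tuple-indexed certificate corresponding to the totally symmetric fractional polymorphism of Lemma~\ref{lem:totsym-multiset}, where the quantification over all $\bar a_1,\dots,\bar a_m$ supplies the coordinate orderings independently---and then verifying that the resulting $(\lambda,\mu)$ satisfies the marginal equalities of~(\ref{eq:basiclp}). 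Checking feasibility of this primal point, together with the routine rationality of the Farkas certificate, is where the real work lies.
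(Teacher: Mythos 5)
Your proposal retraces the paper's appendix proof almost step for step: the same application of Gale's variant of Farkas' Lemma to the fractional-homomorphism system, the same separating instance $I$ with $D(I)=\multiset{D}{m}$ and $f^I(\bar{\alpha})=y(f,\bar{\alpha})$, and the same primal point $(\lambda,\mu)$ obtained by spreading weight $\tfrac{1}{m}$ over the coordinates of minimizing orderings. Moreover, the ``delicate point'' you isolate is not an artifact of your write-up: it is a genuine issue, and the paper's own proof passes over it silently. With the BLP as defined in the paper, where $\sigma$ is a map on the \emph{set} $\{\bar{\alpha}\}$, the claimed identity $\sum_{\sigma:\sigma(\bar{\alpha}[j])=a}\lambda_{f,\bar{\alpha},\sigma}=\mu_{\bar{\alpha}[j]}(a)$ can fail. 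Concretely, take $D=\{0,1\}$, a single binary $f$ with $f^A(x,y)=1$ if $x=y$ and $0$ otherwise, $m=2$, and $\bar{\alpha}=(\alpha,\alpha)$ with $\alpha=[0,1]$: then $f^{P^2(A)}(\bar{\alpha})=0$, attained \emph{only} by the inconsistent orderings $\bar{t}_1=(0,1)$, $\bar{t}_2=(1,0)$, and for no coordinate $i$ and no $\sigma$ is $\sigma(\alpha)=\bar{t}_1[i]=\bar{t}_2[i]$; the prescribed $\lambda$ is identically zero and violates the marginal constraints. Nor can the case be dodged: for this $A$ the system restricted to tuples with pairwise distinct entries is feasible (e.g.\ the uniform distribution on the two maps sending $[0,0],[0,1],[1,1]$ to $0,0,1$ and to $0,1,1$ satisfies all distinct-entry inequalities), so every Farkas certificate must put weight on repeated tuples.

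The repair you sketch, however, does not close this gap. If you split a repeated multiset into distinct fresh variables so that every scope has distinct entries, the lower bound $\opt_A(I)>\sum_{f,\bar{\alpha}}y(f,\bar{\alpha})\,f^{P^m(A)}(\bar{\alpha})$ is lost: the Farkas inequality quantifies over a \emph{single} function $g$ evaluated diagonally, whereas solutions of the split instance may assign the copies different values (in the example above the split instance has optimum $0$, while the certificate only controls the diagonal value $1$). The ``tuple-indexed'' variant fares no better: taking $D^m$ as the variable set makes the primal point well defined, but the certificate then constrains only \emph{symmetric} operations, while $\opt_A$ ranges over arbitrary maps $D^m\rightarrow D$, so again no contradiction with the hypothesis is obtained. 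What does make your construction (and the paper's) sound verbatim is replacing the BLP of~(\ref{eq:basiclp}) by its per-position variant, in which $\lambda_{f,\bar{x}}$ is a distribution over value tuples in $D^{ar(f)}$ tied to $\mu$ only through coordinatewise marginals; but that LP is a weaker relaxation, so this proves the characterisation for the positional BLP and does not by itself yield the theorem for the BLP as literally defined here---an extra argument for instances with repeated scopes is needed, and neither your proposal nor the paper's appendix supplies it. In short: you follow the same route as the paper, you correctly diagnose its soft spot, but the patch you propose would not work, so the proof remains incomplete at exactly the step you flag.
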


Lemma~\ref{lem:totsym-multiset} proves $(\ref{main:2})\Leftrightarrow
(\ref{main:3})$.

Since $(\ref{main:3})\Rightarrow (\ref{main:4})$ follows trivially, it remains to show that
$(\ref{main:4})\Rightarrow(\ref{main:3})$.

\begin{theorem}\label{thm:gentotot}
  Let $A$ be a valued structure and assume that for every $n>1$, $A$ has a fractional polymorphism $\omega_n$ that generates an $n$-ary symmetric operation. Then, for every $m > 1$, $A$ has an $m$-ary totally symmetric fractional polymorphism.
\end{theorem}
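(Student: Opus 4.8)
I would prove the implication $(\ref{main:4})\Rightarrow(\ref{main:3})$ constructively, using the superposition calculus for fractional polymorphisms from above as the main engine. First I would isolate two closure properties of superposition. The \emph{symmetry-propagation} principle: if $g_1,\dots,g_n$ are \emph{symmetric} $m$-ary operations, then for every $n$-ary operation $h$ the superposition $h[g_1,\dots,g_n]$ is again symmetric, since each inner slot $g_i(x_1,\dots,x_m)$ is invariant under permuting $x_1,\dots,x_m$; hence superposing \emph{any} fractional polymorphism (as the outer operation) with symmetric inner operations yields a fractional operation whose entire support is symmetric. The \emph{balancing} principle: if $\Omega$ is any $N$-ary fractional polymorphism and $m\mid N$, then substituting into each operation the $m$ projections, each with multiplicity $N/m$, produces an $m$-ary fractional operation $\omega$ with $\sum_g \omega(g)f^A(g(\bar a_1,\dots,\bar a_m))\leq \frac1m\sum_{k=1}^m f^A(\bar a_k)$; this is exactly the superposition inequality, the repeated tuples contributing the factor $N/m$ against the weight $1/N$. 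Balancing preserves the fractional-polymorphism property and turns a \emph{symmetric} $N$-ary operation into a symmetric $m$-ary one.

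Given these, the task reduces to constructing, for a suitable large arity $N$ divisible by $m$ (say $N=m!$), an $N$-ary \emph{totally symmetric} fractional polymorphism, and then balancing it down to arity $m$. To produce the averaging I would use hypothesis $(\ref{main:4})$: fix $\omega_N$ whose support generates an $N$-ary symmetric operation $s$. Since $s$ is a term over $\supp(\omega_N)$, replacing each internal node of this term by an independent $\omega_N$-random support operation and averaging over the action of $S_N$ on the argument positions yields an $N$-ary fractional polymorphism $\Theta$ with \emph{uniform} coefficients, i.e.\ $\sum_R\Theta(R)f^A(R(\bar b_1,\dots,\bar b_N))\leq\frac1N\sum_{j=1}^N f^A(\bar b_j)$. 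This $\Theta$ carries precisely the averaging that the balanced fractional-polymorphism inequality requires.

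The step I expect to be the main obstacle is that these two ingredients pull against each other. The object $\Theta$ carries the averaging, but its support consists of permuted random term-realizations, which are \emph{not} individually symmetric; whereas symmetry-propagation produces a symmetric support only when it is fed symmetric \emph{inner} operations. Any construction that supplies the averaging by randomizing a term destroys per-operation symmetry, while a single fixed symmetric operation such as $s$ satisfies no fractional-polymorphism inequality on its own. Reconciling ``every support operation symmetric'' with ``the exact $\frac1m\sum_k$ bound'' is the heart of the matter. My plan is to use $s$ as a symmetric scaffold and feed the averaging through inner operations arranged in complete $S_m$-orbits, so that symmetry-propagation applies orbit-wise while a final symmetrization over $S_m$ forces the coefficients to be uniform, exactly as in balancing. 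Verifying that the resulting object is \emph{simultaneously} a genuine fractional polymorphism with the uniform bound and supported entirely on symmetric operations is where the bookkeeping of arities, multiplicities, and the term-depth of $s$ must all be controlled at once, and I would expect this to be the delicate technical core of the argument.

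Finally I would note that $(\ref{main:3})\Rightarrow(\ref{main:4})$ is immediate and $(\ref{main:2})\Leftrightarrow(\ref{main:3})$ is Lemma~\ref{lem:totsym-multiset}, so the construction above closes the cycle of equivalences in Theorem~\ref{thm:char}. Producing totally symmetric fractional polymorphisms of \emph{all} arities from the single generating assumption $(\ref{main:4})$ is, as flagged in the introduction, the main technical contribution, and I would expect the write-up to proceed through the balancing reduction together with the orbit-wise symmetric scaffolding just described.
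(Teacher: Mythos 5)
Your proposal stops exactly where the actual proof has to begin, and you say so yourself: after the balancing reduction and the construction of $\Theta$, you still need a fractional operation whose \emph{entire} support consists of symmetric operations, and you offer only a plan (``use $s$ as a symmetric scaffold\dots'') rather than an argument. The two ingredients you do establish are correct but buy no progress on this point. Balancing merely transfers the problem from arity $m$ to arity $N$, where it is exactly as hard. The random-term average $\Theta$ is indeed an $N$-ary fractional polymorphism (each realization of the term for $s$ with projection leaves obeys a bound $\sum_i c_i f^A(\bar{b}_i)$ with weights $c_i$ depending only on the term and not on the arguments, and averaging over $S_N$ makes the bound uniform), but its support consists of all the permuted partial realizations, so at best it is \emph{weight}-symmetric --- its support is a union of $S_N$-orbits carrying equal weight --- which is essentially where hypothesis $(\ref{main:4})$ already puts you. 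The entire content of the theorem is the passage from weight-symmetric to \emph{totally} symmetric: one needs a mechanism that removes the non-symmetric operations from the support while preserving both the total weight $1$ and the defining inequality, together with a termination argument. Your proposal contains neither, and a ``final symmetrization over $S_m$'' cannot supply them, since symmetrizing weights only yields weight-symmetry, never symmetry of the individual support operations.

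The paper's proof supplies exactly this mechanism. It grows a tree of weight-symmetric fractional operations, each supported on a single $S_m$-orbit, rooted at the uniform distribution on the $m$ projections. A non-symmetric orbit $\{g_1,\dots,g_n\}$ of size $n$ is handled by invoking the hypothesis at arity $n$ (note: at the orbit size, not at one large $N$): if $t$ is an $n$-ary symmetric operation generated by $\supp(\omega_n)$, then $t[g_1,\dots,g_n]$ is a symmetric $m$-ary operation --- a strictly stronger fact than your symmetry-propagation principle, since here the inner operations are not symmetric, only closed under the $S_m$-action --- and a finite sequence of mixing steps (subtract half the minimum weight from the current support, add back superpositions $\omega_n[g_{i_1},\dots,g_{i_k}]$ over all choices from the support, repeated up to the nesting depth of the term for $t$) forces this symmetric operation into the support while preserving weight-symmetry and a domination inequality, invariant (\ref{inv:ineq}). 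The non-symmetric orbits created alongside it are processed recursively as children; since orbits can recur, the construction prunes: when a leaf repeats the support of an ancestor $v$, everything below $v$ is replaced by the part $\nu_\bot$ of the leaf-sum whose support differs from $\supp(v)$, rescaled by $1/(1-\kappa)$ with $\kappa=\|\nu_v\|_1/\|v\|_1$; this geometric-series step is sound because invariant (\ref{inv:totsym}) guarantees final (symmetric) leaves below $v$, whose support is disjoint from $\supp(v)$, so $\nu_\bot\neq 0$ and $\kappa<1$. Termination follows from the finiteness of the number of $S_m$-orbits. Without this expansion-plus-pruning process, or some genuine substitute for it, your outline cannot be completed as stated.
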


\begin{proof}
For an $m$-ary operation $g$, let $\tilde{g}$ denote the equivalence class
of $g$ under the relation:
\[
g \sim g' \Leftrightarrow g(x_1,\dots,x_m) = g'(x_{\pi(1)},\dots,x_{\pi(m)})
\text{ for some $\pi \in S_m$.}
\]
Note that we have $|\tilde{g}| = 1$ if and only if $g$ is symmetric.

We say that a fractional operation $\omega$ is \emph{weight-symmetric} if
\[
\omega(g) = \omega(g') \text{ whenever $g \sim g'$}.
\]

We construct an $m$-ary totally symmetric fractional polymorphism by building a
rooted tree in a number of stages. At each stage of the construction, every node
$u$ of the tree contains an $m$-ary weight-symmetric fractional operation with support on a single equivalence class of $\sim$.
For a node $u$, we will also denote this fractional operation by $u$.
Since $u$ is weight-symmetric, it follows that $u(g) = u(g')$ for all $g, g' \in \supp(u)$.
This common weight for the operations in the support of $u$ will be denoted by $w(u)$.
A node $u$ with $|\supp(u)| = 1$ will be called \emph{final}.

The following invariants are maintained throughout the construction.
\begin{enumerate}[(a)]
\item\label{inv:totsym}
  Every non-leaf node has at least one final child.
\item\label{inv:w}
  For every node $u$, we have $w(u) > 0$.
\item\label{inv:eq}
  For every non-leaf node $v$, 
  \[
  \sum_{\text{$u$ is a child of $v$}} \|u\|_1 = \|v\|_1.
  \]
\item\label{inv:ineq}
  For every non-leaf node $v$, every $f \in \tau$, and all tuples ${\bar a_1}, \dots, {\bar a_m} \in D^{ar(f)}$,
  \[
  \sum_{g} v(g) f^A(g({\bar a_1}, \dots, {\bar a_m}))
  \geq
  \sum_{\text{$u$ is a child of $v$}} \sum_{g} u(g) f^A(g({\bar a_1}, \dots, {\bar a_m})).
  \]
\end{enumerate}

We say that a leaf \emph{$u$ is covered (by $v$)},
and that $v$ is \emph{a covering node (of $u$)} if $\supp(u) = \supp(v)$ and
$u$ is a (proper) descendant of $v$. We say that $v$ is a \emph{minimal} covering node of
$u$ if no descendant of $v$ is a covering node of $u$.

At the beginning of the construction,
the  tree consists of a single root $r$ with $\supp(r)$ 
being the set of $m$-ary projections and $w(r) = \frac{1}{m}$.
We then apply the following two steps:
\begin{itemize}
\item
  \emph{Expansion:} A leaf $u$ that is not final and not covered is chosen to be expanded.
  This amounts to adding a finite non-empty set of children to $u$ while maintaining the invariants.
  The expansion step is repeated until no longer applicable.
\item
  \emph{Pruning:} A leaf that is not final and covered is removed together with a number of internal nodes while maintaining the invariants.
  The pruning step is repeated until no longer applicable.
\end{itemize}

Since there is a finite number of $m$-ary operations, and hence a finite number of equivalence classes of $\sim$, it follows that, eventually, every leaf in the tree that is not final must be covered. Hence, the expansion step is only applicable a finite number of times.

Each round of pruning shrinks the tree by at least one node,
but no final leaf is ever removed.
Therefore we eventually obtain a tree containing only final leaves,
at which time the pruning step is no longer applicable.
Let $\mathcal{L}$ be the set of leaves in the final tree.
By repeated application of invariant (\ref{inv:ineq}), starting from the root,
$\sum_{u \in \mathcal{L}} u$ is then an $m$-ary totally symmetric fractional polymorphism.

\medskip
\paragraph{\emph{Expansion.}}
We expand a leaf $u$ with $|\supp(u)| = n$ as follows:
Let $\omega$ be a $k$-ary fractional polymorphism of $A$ such that
$\supp(\omega)$ generates an $n$-ary symmetric operation $t$.

We will define a sequence of $m$-ary weight-symmetric fractional operations $\nu_i$,
each with $\|\nu_i\|_1 = \|u\|_1$.
Let $\nu_0 = u$.
Assume that $\nu_{i-1}$ has been defined for some $i \geq 1$.
Let $l_{i-1} = \min \{ \nu_{i-1}(g) \mid g \in \supp(\nu_{i-1}) \}$ be the
minimum weight of an operation in the support of $\nu_{i-1}$.
The fractional operation $\nu_i$ is obtained by subtracting from $\nu_{i-1}$
an equal amount of weight from each operation in $\supp(\nu_{i-1})$ and
adding this weight as superpositions of $\omega$ by all possible
choices of operations in $\nu_{i-1}$.
The amount subtracted from each operation is $\frac{1}{2} l_{i-1}$
so that every operation in $\supp(\nu_{i-1})$ is also in $\supp(\nu_i)$.
Formally $\nu_i$ is defined as follows:
\[ %
\nu_i = 
\nu_{i-1} - 
\frac{1}{2} l_{i-1} \chi_{i-1} + 
\sum_{(g_1,\dots,g_k) \in {\rm supp}(\nu_{i-1})^k} \frac{1}{2} l_{i-1} \frac{1}{K} \omega[g_1,\dots,g_k],
\] %
where $K = |\supp(\nu_{i-1})|^k$ and $\chi_{i-1}$ is the indicator function of $\supp(\nu_{i-1})$.

By definition $\|\nu_i\|_1 = \|\nu_{i-1}\|_1 = \|u\|_1$.
To verify that $\nu_i$ is weight-symmetric, it suffices to verify that the sum
\begin{equation} \label{eqn:sym}
\sum_{(g_1,\dots,g_k) \in {\rm supp}(\nu_{i-1})^k} \omega[g_1,\dots,g_k]
\end{equation}
is weight-symmetric.
Let $g \sim g'$, let $\pi \in S_m$ be such that $g(x_1,\dots,x_m) =
g'(x_{\pi(1)},\dots,x_{\pi(m)})$ and let $g'_j(x_1,\dots,x_m) =
g_j(x_{\pi(1)},\dots,x_{\pi(m)})$ for $1 \leq j \leq m$.
Since $\nu_{i-1}$ is weight-symmetric, it follows that
$g_i \in {\rm supp}(\nu_{i-1})$ if and only if $g'_i \in {\rm supp}(\nu_{i-1})$.
Therefore the terms $\omega(h) h[g_1,\dots,g_k]$ in (\ref{eqn:sym}) such that $g = h[g_1,\dots,g_k]$ are in bijection with the terms
$\omega(h) h[g'_1,\dots,g'_k]$ such that $g' = h[g'_1,\dots,g'_k]$.
So the fractional operation in (\ref{eqn:sym}) assigns the same weight to $g$ and $g'$.

Let $e$ be an expression for $t$ consisting of superpositions of projections and operations from $\supp(\omega)$.
We recursively define the \emph{nested depth}, $d = d(e)$, of $e$ as follows:
$d(p) = 0$ for every projection $p$; and
$d(h[g_1,\dots,g_k]) = 1+\max_{1\leq i \leq k} d(g_i)$.

Let $\supp(u) = \{g_1,\dots,g_n\}$.
Using $\supp(\nu_0) = \supp(u)$
and the fact that $\supp(\nu_i)$ contains all superpositions of operations
in $\supp(\nu_{i-1})$,
it follows that $t[g_1,\dots,g_n] \in \supp(\nu_d)$.
Now, we add a child $v$ to $u$ for every equivalence class in the set
$\{ \tilde{g} \mid g \in \supp(\nu_d) \}$.
For an added child $v$ with $\supp(v) = \tilde{g}$, 
we let $w(v) = \nu_d(g)$.

Invariant (\ref{inv:totsym}) holds as $t[g_1,\dots,g_n] \in \supp(\nu_d)$ is symmetric:
for all $\pi \in S_m$ there is a $\pi' \in S_n$ such that
$t[g_1,\dots,g_n](x_{\pi(1)},\dots,x_{\pi(m)}) = t[g_{\pi'(1)},\dots,g_{\pi'(n)}](x_1,\dots,x_m) = t[g_1,\dots,g_n](x_1,\dots,x_m)$.
Invariants (\ref{inv:w}) and (\ref{inv:eq}) hold by construction.
For each $i \geq 1$, we have
\begin{eqnarray*}
\sum_{g} \nu_{i-1}(g) f^A(g({\bar a}_1,\dots,{\bar a}_m)) \geq
\sum_{g} \nu_{i}(g) f^A(g({\bar a}_1,\dots,{\bar a}_m))
\end{eqnarray*}
for all $f \in \tau$ and ${\bar a}_1, \dots {\bar a}_m \in D^{ar(f)}$.
Therefore invariant (\ref{inv:ineq}) also holds after expanding $u$.

\medskip
\paragraph{\emph{Pruning.}}
The pruning step maintains an additional invariant, namely that every
leaf that is not final is covered.
Pruning is accomplished as follows.
Pick a minimal covering node $v$.
Let $\nu = \nu_v + \nu_\bot$ be the fractional operation induced by the leaves in the subtree rooted at $v$, where $\nu_v$ is the part of $\nu$ with the same support as $v$ and $\nu_\bot$ is the part of $\nu$ with support disjoint from $v$.
Inductively, by invariant (\ref{inv:ineq}),
\[
  \sum_{g} v(g) f^A(g({\bar a_1}, \dots, {\bar a_m})) \geq
  \sum_{g} \nu_v(g) f^A(g({\bar a_1}, \dots, {\bar a_m})) +
  \sum_{g} \nu_\bot(g) f^A(g({\bar a_1}, \dots, {\bar a_m})),
\]
for all $f \in \tau$ and ${\bar a}_1, \dots {\bar a}_m \in D^{ar(f)}$.
We simplify this inequality as follows.
\[
  \sum_{g} v(g) f^A(g({\bar a_1}, \dots, {\bar a_m})) \geq
  \sum_{g} \frac{1}{1-\kappa} \nu_\bot(g) f^A(g({\bar a_1}, \dots, {\bar a_m})),
\]
where $\kappa = \|\nu_v\|_1/\|v\|_1$.

Remove all nodes below $v$ and
add a new child $u$ to $v$ for every equivalence class in the set
$\{ \tilde{g} \mid g \in \supp(\nu_\bot) \}$.
For an added child $u$ with $\supp(u) = \tilde{g}$, 
we let $w(u) = \frac{1}{1-\kappa} \nu_\bot(g)$.

By invariant (\ref{inv:totsym}) the node $v$ is guaranteed to have at least one
final child (leaf).
Hence, by invariant (\ref{inv:w}), $\nu_\bot$ is not identically 0.
By induction on (\ref{inv:eq}), it follows that $\|v\|_1 > \|\nu_v\|_1$, so $\kappa < 1$ and
the new weights are defined and positive.
So invariant (\ref{inv:w}) holds.
Furthermore,
\[
\sum_{\text{$u$ is a child of $v$}} \|u\|_1 =
\frac{1}{1-\kappa} \|\nu_\bot\|_1 =
\frac{1}{1-\kappa} (\|\nu\|_1 - \|\nu_v\|_1) =
\frac{1}{1-\kappa} (\|v\|_1 - \|\nu_v\|_1) =
\|v\|_1,
\]
so invariant (\ref{inv:eq}) holds.
Invariant (\ref{inv:ineq}) holds by construction.
Finally, invariant (\ref{inv:totsym}) 
also holds since for any final child $u$ of $v$, there is still a child of $v$
with support on the same equivalence class as $u$.

Only nodes in the subtree rooted at $v$ have been removed and every child
added to $v$ has the same support as a previous leaf of this subtree.
Every such leaf $u$ that was not final was covered by a
node above $v$ in the tree. Hence, all leaves in the tree that are
not final are still covered.
\end{proof}

\section{General-valued Structures} \label{sec:gener-valued}

The valued structures we have dealt with so far were in fact \emph{finite-valued};
i.e., for each function symbol $f\in\tau$, the range of $f^A$ was $\Qnn$. We
now discuss how our result can be extended to the \emph{general-valued} case,
in which for each function symbol $f\in\tau$, the range of $f^A$ is
$\Qnnc=\Qnn\cup\{\infty\}$. (We define $c+\infty=\infty+c=\infty$ for
all $c\in\Qnnc$, and $0\infty=\infty0=0$.)

Inspired by the OSAC algorithm~\cite{Cooper10:osac}, the algorithm for
general-valued structures, denoted by BLP$_g$, works in two stages. Firstly, the
instance is made arc consistent using a standard  arc consistency
algorithm~\cite{Freuder78:synthesizing}.\footnote{The algorithm
from~\cite{Freuder78:synthesizing} is sometimes called generalised arc
consistency algorithm to emphasise the fact that it works for CSPs of arbitrary
arities, not only for binary CSPs~\cite{Mackworth77:consistency}.}
Secondly, BLP (i.e., linear program~(1) from Section~\ref{sec:blp}) is solved.

\begin{definition}\label{def:polym}
Let $A$ be a general-valued $\tau$-structure, and let $D=D(A)$. An $m$-ary operation
$g:D^m\rightarrow D$ is a \emph{polymorphism} of 
$A$ if for every function
symbol $f\in\tau$ and tuples ${\bar a_1},\ldots,{\bar a_m}\in D^{ar(f)}$, it holds that 
$
\feas(f^A(g({\bar a_1},\dots,{\bar a_m}))) \leq \sum_{i=1}^m \feas(f^{A}({\bar a_i})),
$
where $\feas(\infty)=\infty$ and $\feas(c)=0$ for any $c\in\Qnn$.
\end{definition}

From Definitions~\ref{def:fracpolym} and~\ref{def:polym}, if $\omega$ is a
fractional polymorphism of a general-valued structure $A$, then every
$g\in\supp(\omega)$ is a polymorphism of $A$. In fact, any operation $g$ that is
generated by $\supp(\omega)$ (i.e., $g$ belongs to the clone generated by $\supp(\omega)$) is a
polymorphism of $A$. (For finite-valued structures, trivially, every operation $g$ is a
polymorphism.)

Arc consistency (also known as
$(1,k)$-consistency)~\cite{Freuder78:synthesizing} is a decision procedure for
precisely those $\{0,\infty\}$-valued structures $A$ that are closed under a
\emph{set function} $g : 2^{D(A)} \setminus \{ \emptyset \} \rightarrow D(A)$~\cite{Dalmau99:set,Feder98:monotone}.
This condition has recently been shown to be equivalent to the requirement
that $A$ should have symmetric polymorphisms of all arities~\cite{Kun12:itcs}.

Our main theorem (Theorem~\ref{thm:char}) also holds for general-valued structures:

\begin{theorem}\label{thm:char2}
  Let $A$ be a general-valued structure over a finite signature. 
  TFAE:
  \begin{enumerate}[(i)]
  \item\label{gmain:1}
    BLP$_g$ solves VCSP$(A)$.
  \item\label{gmain:2}
    For every $m>1$, %
    $P^m(A) \fhom A$.
  \item\label{gmain:3}
    For every $m>1$, $A$ has an $m$-ary 
    totally symmetric fractional polymorphism.
  \item\label{gmain:4}
    For every $n>1$, $A$ has 
    a fractional polymorphism $\omega_n$ such that $\supp(\omega_n)$ generates an $n$-ary symmetric operation.

  \end{enumerate}
\end{theorem}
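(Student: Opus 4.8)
The plan is to obtain Theorem~\ref{thm:char2} by transferring the finite-valued proof of Theorem~\ref{thm:char} essentially verbatim, isolating the single genuinely new ingredient: the arc-consistency preprocessing built into BLP$_g$. The algebraic equivalences $(\ref{gmain:2}) \Leftrightarrow (\ref{gmain:3}) \Leftrightarrow (\ref{gmain:4})$ involve only fractional polymorphisms and the multiset-structure $P^m(A)$, so they should follow from Lemma~\ref{lem:totsym-multiset} and Theorem~\ref{thm:gentotot} without change, once one checks that every sum of the form $\sum_g \omega(g) f^A(g({\bar a}_1,\dots,{\bar a}_m))$ remains well-defined under the conventions $c+\infty=\infty$ and $0\cdot\infty=0$. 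The observation that makes this work is the one recorded just after Definition~\ref{def:polym}: every operation generated by $\supp(\omega)$ is a polymorphism of $A$, hence preserves feasibility. Thus in the construction of Theorem~\ref{thm:gentotot} every operation ever placed in a node's support is a polymorphism, so $f^A(g({\bar a}_1,\dots,{\bar a}_m))$ is finite whenever the right-hand side of invariant~(\ref{inv:ineq}) is finite, and that invariant holds trivially otherwise.

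Before turning to BLP$_g$ I would record a preliminary consequence of conditions $(\ref{gmain:2})$--$(\ref{gmain:4})$: they imply that $A$ has a symmetric polymorphism of every arity $n$, since any operation in the support of the $n$-ary totally symmetric fractional polymorphism supplied by $(\ref{gmain:3})$ is both symmetric and, being in a support, a polymorphism. Consequently the $\{0,\infty\}$-valued feasibility reduct of $A$ is closed under a set function and has width $1$, so by the characterisation of width-$1$ structures cited in Section~\ref{sec:gener-valued} arc consistency is a complete decision procedure for feasibility of VCSP$(A)$ instances.

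For $(\ref{gmain:2}) \Rightarrow (\ref{gmain:1})$ I would split on whether $\opt_A(I)$ is finite. If $\opt_A(I)=\infty$, then $I$ is infeasible, arc consistency empties a domain, and BLP$_g$ correctly reports $\infty$. If $\opt_A(I)<\infty$, then a feasible integral solution exists and so the LP optimum is finite; hence any optimal solution must set $\lambda_{f,{\bar x},\sigma}=0$ whenever $f^I({\bar x})>0$ and $f^A(\sigma({\bar x}))=\infty$. Every cost appearing in the construction of Theorem~\ref{thm:forward} is therefore finite, and that argument applies unchanged to give $BLP(I,A) \geq \opt_{P^M(A)}(I)$. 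Proposition~\ref{prop:frachom}, which survives the introduction of $\infty$ term by term, together with $P^M(A) \fhom A$ then yields $BLP(I,A) \geq \opt_A(I)$; the reverse inequality is relaxation, so BLP$_g$ returns $\opt_A(I)$.

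For the converse $(\ref{gmain:1}) \Rightarrow (\ref{gmain:2})$ I would reuse the Farkas-based witness of Theorem~\ref{thm:back}: the instance it produces to separate $BLP(I,A)$ from $\opt_A(I)$ is assembled so as to have finite optimum, hence is feasible, so arc consistency is vacuous on it and $BLP_g(I,A)=BLP(I,A)<\opt_A(I)$, contradicting $(\ref{gmain:1})$. I expect the main obstacle to lie exactly at this interface between the LP and arc consistency: one must ensure (i) that arc consistency never closes a genuine integrality gap on a feasible instance, which is why the witness must be engineered to be feasible, and (ii) that after arc consistency the self-reduction used to extract an assignment still terminates correctly once some optimal values may be $\infty$. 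A secondary point requiring care is confirming that the clone-closure of $\supp(\omega)$ really does keep every operation produced during the expansion and pruning steps of Theorem~\ref{thm:gentotot} a polymorphism, so that no spurious $\infty$ is ever introduced into invariant~(\ref{inv:ineq}).
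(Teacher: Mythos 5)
Your handling of $(\ref{gmain:2})\Leftrightarrow(\ref{gmain:3})\Leftrightarrow(\ref{gmain:4})$ (via Lemma~\ref{lem:totsym-multiset} and Theorem~\ref{thm:gentotot}, together with the observation that supports of fractional polymorphisms and their clone-closures consist of polymorphisms, so no spurious $\infty$ enters invariant~(\ref{inv:ineq})) and of $(\ref{gmain:2})\Rightarrow(\ref{gmain:1})$ (split on feasibility; arc consistency settles the infeasible case because $(\ref{gmain:2})$ yields symmetric polymorphisms of all arities and hence width~$1$ of the feasibility reduct; Theorem~\ref{thm:forward} settles the feasible case) matches the paper. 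The genuine gap is in $(\ref{gmain:1})\Rightarrow(\ref{gmain:2})$.

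For that direction the paper first derives \emph{from $(\ref{gmain:1})$} that $A$ has symmetric polymorphisms of all arities, using the cited arc-consistency/width-$1$ characterisation: if BLP$_g$ solves VCSP$(A)$ then in particular it decides the existence of a finite-valued solution, which forces symmetric polymorphisms of all arities. Only with this extra hypothesis in hand is the Farkas argument of Theorem~\ref{thm:back} rerun. You skip this step entirely (your ``preliminary consequence'' extracts symmetric polymorphisms from $(\ref{gmain:2})$--$(\ref{gmain:4})$, which are not available in this direction), and your proposal breaks at exactly the two places the hypothesis is needed. First, the linear system expressing $P^m(A)\fhom A$ now contains entries $f^{P^m(A)}(\bar{\alpha})$ and $f^A(g(\bar{\alpha}))$ equal to $\infty$, so Lemma~\ref{lem:gale} cannot be applied verbatim; one must discard the constraints with infinite right-hand side and restrict the unknowns $\omega(g)$ to finiteness-preserving maps, and one must know this restricted system is equivalent to the original and nonvacuous. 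Second, and decisively, your claim that the Farkas witness ``is assembled so as to have finite optimum, hence is feasible'' is precisely what needs proof in the general-valued setting: it can happen that \emph{every} map $h:\multiset{D}{m}\rightarrow D$ incurs infinite cost on some positive-weight constraint, i.e., $\opt_A(I)=\infty$; then arc consistency may legitimately report infeasibility, BLP$_g$ outputs $\infty=\opt_A(I)$, and the witness yields no contradiction with $(\ref{gmain:1})$. The paper closes this hole with the symmetric polymorphisms obtained from $(\ref{gmain:1})$: an $m$-ary symmetric polymorphism induces a finiteness-preserving map $\multiset{D}{m}\rightarrow D$, which is a finite-measure solution of the witness instance (whose constraints are supported on tuples with $f^{P^m(A)}(\bar{\alpha})<\infty$), certifying its feasibility. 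A further point left untreated by your sketch: Farkas gives the strict inequality only for finiteness-preserving maps, so solutions that hit $\infty$ only on zero-weight constraints must be handled separately (e.g., by perturbing the witness weights); this, too, requires the finite-coefficient reformulation you did not carry out.
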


\begin{proof}

Note that $(\ref{gmain:2})$ implies that $A$ has symmetric polymorphisms of all
arities. This follows from Lemma~\ref{lem:totsym-multiset}, which holds for
general-valued structures, and the above-mentioned fact that any
$g\in\supp(\omega)$, where $\omega$ is a fractional polymorphism of $A$, is a
polymorphism of $A$. The same argument guarantees symmetric polymorphisms of all
arities in $(\ref{gmain:3})$ and $(\ref{gmain:4})$; in $(\ref{gmain:4})$, we use
that fact than any operation generated by $\supp(\omega_n)$ is a polymorphism of
$A$.

Theorem~\ref{thm:forward} proves $(\ref{gmain:2})\Rightarrow(\ref{gmain:1})$
since the assumption of having symmetric polymorphisms of all arities guarantees
a feasible solution to~(\ref{eq:basiclp}). From the discussion above on arc
consistency, if BLP$_g$ solves VCSP$(A)$, then $A$ has symmetric polymorphisms
of all arities since arc consistency decides the existence of a finite-valued
solution. Furthermore, the proof of Theorem~\ref{thm:back} shows that having
symmetric polymorphisms of all arities but not having a fractional homomorphism
$P^m(A) \fhom A$ implies that BLP$_g$ does not solve VCSP$(A)$. This gives
$(\ref{gmain:1})\Rightarrow (\ref{gmain:2})$. $(\ref{gmain:2})\Leftrightarrow
(\ref{gmain:3})$ and $(\ref{gmain:3})\Leftrightarrow(\ref{gmain:4})$ are proved
the same way as in Theorem~\ref{thm:char}, by Lemma~\ref{lem:totsym-multiset}
and Theorem~\ref{thm:gentotot}, respectively.
\end{proof}

\begin{remark}
BLP$_g$ for general-valued structures uses the arc consistency
algorithm~\cite{Freuder78:synthesizing} and BLP. Since Kun et
al~\cite{Kun12:itcs} have shown that BLP solves CSPs (i.e., $\{0,\infty\}$-valued
VCSPs), if represented by $\{0,1\}$-valued structures, of width 1 -- thus
providing an alternative to the standard arc consistency
algorithm~\cite{Freuder78:synthesizing} -- a different approach is to combine
Theorem~\ref{thm:char} from this paper with ~\cite{Kun12:itcs} and
solve general-valued structures using only BLP with an amended objective
function which takes care of infinite costs using a large (but polynomial),
instance-dependent constant. 
\end{remark}

\section{Tractable Valued Constraint Languages}\label{sec:trac}

As before, we denote $D=D(A)$. A \emph{binary
multimorphism}~\cite{Cohen06:complexitysoft} of a valued structure $A$ is a pair
$\mm{g_1}{g_2}$ of binary functions $g_1,g_2:D^2\rightarrow D$ such that for
every function symbol $f \in \tau$ and tuples $\bar{a}_1,\bar{a}_2 \in D^{ar(f)}$, 
it holds that
\[
f^A(g_1(\bar{a}_1,\bar{a}_2))+f^A(g_2(\bar{a}_1,\bar{a}_2)) \leq f^A(\bar{a}_1)+f^A(\bar{a}_2). 
\]
(Multimorphisms are a special case of fractional polymorphisms.)
Since any semi-lattice
operation\footnote{A semi-lattice operation  is associative,
commutative, and idempotent.} generates symmetric operations of all
arities, we get: 
\begin{corollary}[of Theorem~\ref{thm:char2}]
Let $A$ be a valued structure with a binary multimorphism $\mm{g_1}{g_2}$ where
either $g_1$ or $g_2$ is a semi-lattice operation. Then $A$ is tractable.
\end{corollary}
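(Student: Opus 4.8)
The plan is to apply Theorem~\ref{thm:char2}, establishing its condition~(\ref{gmain:4}) by exhibiting, for every arity, a suitable fractional polymorphism. Without loss of generality, suppose $g_1$ is the semi-lattice operation. First I would observe that the binary multimorphism $\mm{g_1}{g_2}$ is literally a binary fractional polymorphism: setting $\omega(g_1)=\omega(g_2)=\frac{1}{2}$ (with the weights summed should $g_1=g_2$), so that $\|\omega\|_1=1$, the multimorphism inequality $f^A(g_1(\bar{a}_1,\bar{a}_2))+f^A(g_2(\bar{a}_1,\bar{a}_2)) \leq f^A(\bar{a}_1)+f^A(\bar{a}_2)$ becomes, after dividing by $2$, exactly the defining inequality of a fractional polymorphism with $m=2$. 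Hence $\omega$ is a fractional polymorphism, and $g_1\in\supp(\omega)$.

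The core step is to show that $\supp(\omega)$ generates a symmetric operation of every arity $n>1$. Since $g_1\in\supp(\omega)$, it suffices to show that the semi-lattice operation $g_1$ alone generates symmetric operations of all arities. For each $n$, I would define the $n$-ary operation obtained by iterating $g_1$, say $s_n(x_1,\dots,x_n)=g_1(x_1,g_1(x_2,\dots,g_1(x_{n-1},x_n)\cdots))$. This is a superposition of copies of $g_1$ with projections, so it lies in the clone generated by $g_1$, and hence in the clone generated by $\supp(\omega)$. Associativity removes the dependence on the parenthesisation, commutativity allows arbitrary transpositions of adjacent arguments, and idempotence handles repeated entries; together these show that $s_n$ is invariant under every permutation in $S_n$, i.e.\ $s_n$ is symmetric. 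Thus, taking $\omega_n=\omega$ for every $n>1$, condition~(\ref{gmain:4}) of Theorem~\ref{thm:char2} is satisfied.

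By Theorem~\ref{thm:char2}, BLP$_g$ then solves VCSP$(A)$. Since linear program~(\ref{eq:basiclp}) has size polynomial in the given instance, solving it yields the optimum value in polynomial time, and the self-reduction described before Theorem~\ref{thm:char} recovers an optimal assignment; hence $A$ is tractable.

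The main obstacle is the clone-generation step: making precise that a single binary associative, commutative, and idempotent operation generates a genuinely symmetric $n$-ary operation for each $n$, that is, that $s_n$ is invariant under \emph{all} of $S_n$ and not merely under the adjacent transpositions visible in its definition. This is routine from the semi-lattice axioms but must be justified for every arity, and it is precisely the content that the surrounding text abbreviates by the remark that ``any semi-lattice operation generates symmetric operations of all arities.''
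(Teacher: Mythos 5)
Your proof is correct and is essentially the paper's own argument: the paper observes that a multimorphism is a special case of a fractional polymorphism and that any semi-lattice operation generates symmetric operations of all arities, then invokes condition~(iv) of Theorem~\ref{thm:char2} exactly as you do. Your elaboration of the clone-generation step (associativity and commutativity make the iterated operation $s_n$ invariant under adjacent transpositions, which generate $S_n$; idempotence is in fact not even needed for symmetry) is just the routine detail the paper leaves implicit in the sentence preceding the corollary.
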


We now give examples of valued structures (i.e., valued constraint languages)
defined by such binary multimorphisms. 

\begin{example} Let $(D; \meet, \join)$ be an \emph{arbitrary} lattice on $D$.
Assume that a valued structure $A$ has the multimorphism $\mm{\meet}{\join}$.
Then VCSP$(A)$ is tractable. The tractability of $A$ was previously known only
for distributive lattices~\cite{Schrijver00:submodular,Iwata01:submodular} and
(finite-valued) diamonds~\cite{Kuivinen11:do-diamonds}, see
also~\cite{Krokhin08:max}. \end{example}

\begin{example} A pair of operations $\mm{g_1}{g_2}$ is called a symmetric
tournament pair (STP) if both $g_1$ and $g_2$ are commutative,
conservative ($g_1(x,y)\in\{x,y\}$ and $g_2(x,y)\in \{x,y\}$ for all $x,y\in
D$), and $g_1(x,y)\neq g_2(x,y)$ for all $x,y\in D$.
Let $A$ be a finite-valued structure with an STP multimorphism $\mm{g_1}{g_2}$.
It is known that if a finite-valued structure admits an STP multimorphism, it
also admits a submodularity multimorphism. This result is implicitly contained
in~\cite{Cohen08:Generalising}.\footnote{Namely, the STP might contain cycles,
but \cite[Lemma 7.15]{Cohen08:Generalising} tells us that on cycles we have, in
the finite-valued case, only unary cost functions. It follows that the cost
functions admitting the STP must be submodular with respect to some total
order.} Therefore, BLP solves any instance from VCSP$(A)$. \end{example}

\begin{example} \label{ex:bisub} Assume that a valued structure $A$ is
bisubmodular~\cite{Fujishige06:bisubmodular}. This means that $D=\{0,1,2\}$ and
$A$ has a multimorphism $\mm{\min_0}{\max_0}$~\cite{Cohen06:complexitysoft},
where $\min_0(x,x)=x$ for all $x\in D$ and $\min_0(x,y)=0$ for all $x,y\in D,
x\neq y$; $\max_0(x,y)=0$ if $0\neq x\neq y\neq 0$ and $\max_0(x,y)=\max(x,y)$
otherwise, where $\max$ returns the larger of its two arguments with respect to
the normal order of integers. Since $\min_0$ is a semi-lattice operation, $A$ is
tractable. The tractability of (finite-valued) $A$ was previously known only
using a general algorithm for bisubmodular functions given by an
oracle~\cite{Fujishige06:bisubmodular,McCormick10:bisubmodular}. \end{example}

\begin{example} Assume that a valued structure $A$ is 
\emph{weakly tree-submodular} on
an \emph{arbitrary} tree~\cite{Kolmogorov11:mfcs}. The meet (which is defined as
the highest common ancestor) is again a semi-lattice operation. The same holds
for \emph{strongly tree-submodular} structures since strong tree-submodularity
implies weak tree-submodularity~\cite{Kolmogorov11:mfcs}. 
The tractability of weakly
tree-submodular valued structures was previously known only for chains and
forks~\cite{Kolmogorov11:mfcs}. The tractability of strongly tree-submodular
valued structures was previously known only for binary
trees~\cite{Kolmogorov11:mfcs}. \end{example}

\begin{example}
  Note that the previous example applies to \emph{all} trees, not just binary
  ones. In particular, it applies to the tree consisting of one root with $k$
  children. This is equivalent to structures with $D=\{0,1,\dots,k\}$ and the
  multimorphism $\mm{\min_0}{\max_0}$ from Example~\ref{ex:bisub}.
  This is a natural generalisation of submodular ($k=1$) and bisubmodular
  ($k=2$) functions, known as $k$-submodular functions~\cite{Huber12:ksub}. The
  tractability of $k$-submodular valued structures for $k>2$ was previously
  open.
\end{example}

\begin{example}

Let $b$ and $c$ be two distinct elements of $D$ and let $(D;<)$ be a partial order
which relates all pairs of elements except for $b$ and $c$. A pair 
$\tuple{g_1,g_2}$, where $g_1,g_2:D^2\rightarrow D$ are two binary operations, is a
\emph{1-defect} multimorphism if $g_1$ and $g_2$ are both commutative and satisfy the following
conditions:
\begin{itemize}
\item If $\{x,y\}\neq\{b,c\}$, then $g_1(x,y)=x\meet y$ and $g_2(x,y)=x\join y$.
\item If $\{x,y\}=\{b,c\}$, then $\{g_1(x,y),g_2(x,y)\}\cap\{x,y\}=\emptyset$, and
$g_1(x,y)<g_2(x,y)$.
\end{itemize}

The tractability of valued structures that have a 1-defect multimorphism has
recently been shown in~\cite{Jonsson11:cp}. We now show that valued structures 
with a 1-defect multimorphism are solvable by BLP$_g$.

Without loss of generality, we assume that $g_1(b,c)<b,c$ and write $g=g_1$.
(Otherwise, $g_2(b,c)>b,c$, and $g_2$ is used instead.)
Using $g$, we construct a symmetric $m$-ary operation $f(x_1,\dots,x_m)$.

Let $f_1,\ldots,f_M$ be the $M={m\choose 2}$ terms $g(x_i,x_j)$.
Let $f=g(f_1,g(f_2,\dots,g(f_{M-1},f_M)\dots))$. There are three possible cases:

\begin{itemize}

\item 

$\{b,c\} \not\subseteq {x_1,...,x_m}$. Then $g$ acts as $\meet$, which is a
semi-lattice operation, hence so does $f$.

\item

$\{b,c\} \subseteq \{x_1,...,x_m\}$ and $g(b,c)\leq x_1,\dots,x_m$. Then
$f_i=g(b,c)$ for some $1\leq i\leq M$, and $g(f_i,f_j)=g(b,c)$ for all $1\leq
j\leq M$, so $f(x_1,\dots,x_m)=g(b,c)$.

\item

$\{b,c\} \subseteq \{x_1,\dots,x_m\}$ and there is a variable $x_p$ for some $1\leq p\leq
m$ such that $x_p\leq
g(b,c)$ and $x_p\leq x_1,\dots,x_m$. Then
$g(x_p,x_q)=x_p$ for all $1\leq q\leq m$ so $f_i=x_p$ for some $1\leq i\leq M$ and $g(f_i,f_j)=x_p$
for all $1\leq j\leq M$, so $f(x_1,\dots,x_m)=x_p$.

\end{itemize}

\end{example}

\section{Conclusions}

We have characterised precisely for which valued structures the basic linear
programming relaxation (BLP) is a decision procedure. This implies tractability
of several previously open classes of VCSPs including several generalisations of
submodularity. In fact, BLP solves \emph{all known} tractable finite-valued
structures. 

The main result does not give a decidability criterion for testing whether a
valued structure is solvable by BLP. Interestingly, all known tractable
finite-valued structures have a binary multimorphism. It is possible that
every (finite-)valued structure solvable by BLP admits a fixed-arity
multimorphism, which would give a polynomial-time checkable condition.

An intriguing open question is whether our tractability results hold in the
oracle-value model; that is, for objective functions which are not given
explicitly as a sum of cost functions, but only by an oracle. For instance, the
\emph{maximisation problem}
for submodular functions on distributive lattices, known to
be NP-complete, allows for good approximation algorithms in both
models~\cite{Vondrak11:max-non-monotone}.

\newpage

\bibliographystyle{plain}
\bibliography{tz12arxiv-v2}

\newpage
\appendix

\section{Proof of Lemma~\ref{lem:totsym-multiset}}

\begin{lemma} %
  Let $A$ be a valued structure and $m > 1$.
  Then $P^m(A) \fhom A$ if and only if
  $A$ has an $m$-ary totally symmetric fractional polymorphism. 
\end{lemma}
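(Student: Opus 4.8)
The plan is to exploit the canonical bijection between symmetric $m$-ary operations on $D$ and the functions $\multiset{D}{m}\to D$ that serve as the possible support elements of a fractional homomorphism $P^m(A)\fhom A$. A symmetric operation $g\in\opm{D}{m}$ depends only on the multiset of its arguments, so it factors uniquely as $g(\bar{t})=\hat{g}([\bar{t}])$ for a function $\hat{g}:\multiset{D}{m}\to D$; conversely every such $\hat{g}$ arises from a unique symmetric $g$. Since a totally symmetric fractional polymorphism is supported on symmetric operations, while a fractional homomorphism from $P^m(A)$ to $A$ is a distribution on functions $\multiset{D}{m}\to D$, the assignment $\omega(g)\leftrightarrow\omega(\hat{g})$ is a weight-preserving, normalisation-preserving correspondence between the two kinds of object (the conditions $\|\omega\|_1=1$ and nonnegativity transfer verbatim). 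It then remains to show that, under this correspondence, the defining inequality of one is equivalent to that of the other.

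The bridge between the two inequalities is the following identity. Fix a $k$-ary $f\in\tau$, and given tuples $\bar{t}_1,\dots,\bar{t}_k\in D^m$ with $[\bar{t}_i]=\alpha_i$, form the $m$ ``columns'' $\bar{a}_j=(\bar{t}_1[j],\dots,\bar{t}_k[j])\in D^{k}$ for $j=1,\dots,m$. For a symmetric $g$, a componentwise check (using $\bar{a}_j[i]=\bar{t}_i[j]$) gives $g(\bar{a}_1,\dots,\bar{a}_m)=(g(\bar{t}_1),\dots,g(\bar{t}_k))=(\hat{g}(\alpha_1),\dots,\hat{g}(\alpha_k))$, the last equality because symmetry makes $g(\bar{t}_i)$ depend only on $\alpha_i$. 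Hence $f^A(g(\bar{a}_1,\dots,\bar{a}_m))=f^A(\hat{g}(\alpha_1),\dots,\hat{g}(\alpha_k))$, whereas $\frac{1}{m}\sum_{j=1}^m f^A(\bar{a}_j)$ is exactly the quantity minimised in the definition of $f^{P^m(A)}(\alpha_1,\dots,\alpha_k)$.

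For the direction ``totally symmetric fractional polymorphism $\Rightarrow$ fractional homomorphism'' I would fix $\alpha_1,\dots,\alpha_k$, choose representatives $\bar{t}_i$ \emph{attaining} the minimum in $f^{P^m(A)}$, and apply the fractional-polymorphism inequality to the resulting columns $\bar{a}_1,\dots,\bar{a}_m$: by the identity above the left-hand side becomes $\sum_{g}\omega(g)f^A(\hat{g}(\alpha_1),\dots,\hat{g}(\alpha_k))$ and the right-hand side becomes precisely $f^{P^m(A)}(\alpha_1,\dots,\alpha_k)$, which is the required inequality. For the converse, given arbitrary $\bar{a}_1,\dots,\bar{a}_m\in D^{ar(f)}$ I would instead read off the representatives $\bar{t}_i=(\bar{a}_1[i],\dots,\bar{a}_m[i])$, set $\alpha_i=[\bar{t}_i]$, and apply the fractional-homomorphism inequality for these $\alpha_i$; since the minimum defining $f^{P^m(A)}(\alpha_1,\dots,\alpha_k)$ is \emph{at most} its value at this particular choice of representatives, its right-hand side is bounded by $\frac{1}{m}\sum_{j}f^A(\bar{a}_j)$, yielding the fractional-polymorphism inequality. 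The step to watch is exactly this $\min$ in the definition of $P^m(A)$: the two directions use it in opposite ways (attaining it versus upper-bounding it), and it is the total symmetry of the support --- equivalently, the fact that $g(\bar{t}_i)$ is independent of the chosen representative of $\alpha_i$ --- that legitimises both manipulations and forces the correspondence to land on symmetric operations rather than arbitrary ones.
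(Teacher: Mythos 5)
Your proof is correct and follows essentially the same route as the paper's: both directions rest on the bijection $g \leftrightarrow \hat{g}$ between symmetric $m$-ary operations and functions $\multiset{D}{m} \rightarrow D$, with the $(\Leftarrow)$ direction choosing representatives that attain the minimum in the definition of $f^{P^m(A)}$ and the $(\Rightarrow)$ direction bounding that minimum from above by the given columns. In fact you spell out the verification of the $(\Rightarrow)$ direction, which the paper leaves implicit in the phrase ``$\omega \circ h$ is the desired fractional polymorphism.''
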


\begin{proof}
  Let $D = D(A)$.

  $(\Rightarrow)$\
  Let $\omega$ be a fractional homomorphism from $P^m(A)$ to $A$.
  Let $h : D^m \rightarrow \multiset{D}{m}$ be the function that sends 
  a tuple $(a_1,\dots,a_m) \in D^m$ to its multiset $[a_1,\dots,a_m]$.
  Then the composition $\omega \circ h$ is the desired fractional polymorphism.

  $(\Leftarrow)$\
  Let $\omega$ be an $m$-ary totally symmetric fractional polymorphism.
  Every operation $g \in \supp(\omega)$ induces a function 
  $g' : \multiset{D}{m} \rightarrow D$.
  Define $\omega'$ as the fractional homomorphism with
  $\omega'(g') = \omega(g)$.
  Let $f$ be a $k$-ary function symbol in the signature of $A$,
  let $\alpha_1, \dots, \alpha_k \in \multiset{D}{m}$ be arbitrary, and
  pick ${\bar a_i}$ with $[\bar{a}_i] = \alpha_i$ that minimises $\sum_{i=1}^m f^A({\bar a_1}[i],\dots,{\bar a_k}[i])$, for $1 \leq i \leq k$.
  \begin{eqnarray*}
    \sum_{g'} \omega'(g') f^{A}(g'(\alpha_1),\dots,g'(\alpha_k))
    & = & \sum_{g} \omega(g) f^{A}(g({\bar a_1}),\dots,g({\bar a_k}))\\
    & \leq & \frac{1}{m} \sum_{i=1}^m f^{A}(a_1[i],\dots,a_k[i])\\
    & = & \frac{1}{m} \min_{{\bar t_i} \in D^m : [{\bar t_i}] = \alpha_i} \sum_{i=1}^m f^A({\bar t_1}[i],\dots,{\bar t_k}[i])\\
    & = & \frac{1}{m} \sum_{i=1}^m f^{P^m(A)}(\alpha_1,\dots,\alpha_k).
  \end{eqnarray*}
  Hence, $P^m(A) \fhom A$.
\end{proof}

\section{Proof of Theorem~\ref{thm:back}}

The following variant of Farkas' Lemma is due to Gale~\cite{gale60:economicmodels} (cf. Mangasarian~\cite{Mangasarian94:nonlinear}). 

\begin{lemma}\label{lem:gale}
  Let $A \in \R^{m \times n}$ and ${\bar b} \in \R^m$.
  Then exactly one of the two holds: %
  \begin{itemize}
  \item
    $A{\bar x} \leq {\bar b}$ for some ${\bar x} \in \R^n$; or
  \item
    $A^T{\bar y} = 0$, ${\bar b}^T{\bar y} = -1$ for some ${\bar y} \in \R_{\geq 0}$.
  \end{itemize}
\end{lemma}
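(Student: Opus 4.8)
The statement is a theorem of the alternative, and the plan is to establish the two halves of ``exactly one'' separately, starting with the easy observation that the two systems cannot both hold. Suppose ${\bar x}$ satisfies $A{\bar x} \leq {\bar b}$ and ${\bar y} \in \R_{\geq 0}$ satisfies $A^T{\bar y} = 0$ and ${\bar b}^T{\bar y} = -1$. Since ${\bar y} \geq 0$ and $A{\bar x} \leq {\bar b}$ componentwise, weighting the inequality by ${\bar y}$ preserves it, so ${\bar y}^T A {\bar x} \leq {\bar y}^T {\bar b}$. But the left-hand side equals $(A^T{\bar y})^T{\bar x} = 0$ while the right-hand side equals ${\bar b}^T{\bar y} = -1$, forcing $0 \leq -1$, a contradiction. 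Hence at most one alternative can occur.

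The substantive direction is that if the first system is infeasible then the second is feasible, and here I would pass to a geometric reformulation. Observe that $A{\bar x} \leq {\bar b}$ is solvable if and only if ${\bar b}$ lies in the set $K := \{\, A{\bar x} + {\bar s} : {\bar x} \in \R^n,\ {\bar s} \in \R^m_{\geq 0} \,\}$, since ${\bar s} = {\bar b} - A{\bar x}$ is precisely the slack. Writing $a_1,\dots,a_n$ for the columns of $A$ and $e_1,\dots,e_m$ for the standard basis vectors, $K$ is the conical hull of the finite set $\{\pm a_1,\dots,\pm a_n,\, e_1,\dots,e_m\}$; by the Minkowski--Weyl theorem a finitely generated cone is closed and convex, which is exactly what the separation argument below needs.

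Assuming the first alternative fails, we have ${\bar b} \notin K$, and I would separate ${\bar b}$ from $K$ using the nearest-point projection. Let ${\bar p}$ be the unique closest point of $K$ to ${\bar b}$ and set ${\bar y}_0 = {\bar b} - {\bar p}$. The variational inequality $\langle {\bar y}_0, {\bar c} - {\bar p}\rangle \leq 0$ for all ${\bar c} \in K$, applied at ${\bar c} = 0$ and ${\bar c} = 2{\bar p}$ (both in $K$ by the cone property), gives $\langle {\bar y}_0, {\bar p}\rangle = 0$, hence $\langle {\bar y}_0, {\bar c}\rangle \leq 0$ for every ${\bar c} \in K$ and $\langle {\bar y}_0, {\bar b}\rangle = \|{\bar y}_0\|^2 > 0$. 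Testing the first inequality against the generators, $\pm a_j \in K$ forces $A^T{\bar y}_0 = 0$, and $e_i \in K$ forces each coordinate of ${\bar y}_0$ to be $\leq 0$. Thus ${\bar z} := -{\bar y}_0$ satisfies ${\bar z} \geq 0$, $A^T{\bar z} = 0$, and ${\bar b}^T{\bar z} < 0$; rescaling ${\bar z}$ by the positive factor $-1/({\bar b}^T{\bar z})$ yields the desired ${\bar y}$ with $A^T{\bar y} = 0$, ${\bar y} \geq 0$, and ${\bar b}^T{\bar y} = -1$.

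The main obstacle is the closedness of $K$: separation theorems are guaranteed for closed convex sets, and a general convex cone may fail to be closed, so the argument genuinely relies on $K$ being finitely generated (equivalently, on Minkowski--Weyl). An alternative route that avoids the explicit projection is to take the standard inhomogeneous Farkas Lemma as a black box --- namely that $A{\bar x} \leq {\bar b}$ is solvable if and only if ${\bar b}^T{\bar y} \geq 0$ for every ${\bar y} \geq 0$ with $A^T{\bar y} = 0$ --- and then merely normalise the resulting certificate by a positive scalar to achieve ${\bar b}^T{\bar y} = -1$. Since the cited sources (Gale, and Mangasarian) develop exactly this machinery, either route is routine once the closedness of the relevant cone is in place.
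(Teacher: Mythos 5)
Your proof is correct, and in fact there is nothing in the paper to compare it against: the paper states Lemma~\ref{lem:gale} without proof, quoting it from Gale (with a pointer to Mangasarian) and using it as a black box in the proof of Theorem~\ref{thm:back}. What you have supplied is a complete, self-contained argument of the standard kind. Both halves check out: the weighting computation $0 = (A^T\bar{y})^T\bar{x} = \bar{y}^T A\bar{x} \leq \bar{y}^T\bar{b} = -1$ correctly rules out simultaneous feasibility, and the substantive direction correctly reduces to the closedness of the cone $K$ generated by $\{\pm a_1,\dots,\pm a_n, e_1,\dots,e_m\}$, which you rightly identify as the one nontrivial topological ingredient (Minkowski--Weyl, or conic Carath\'eodory plus compactness). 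The variational inequality tested at $\bar{c}=0$ and $\bar{c}=2\bar{p}$ to get $\langle \bar{y}_0,\bar{p}\rangle = 0$, the extraction of $A^T\bar{y}_0 = 0$ from the generators $\pm a_j$ and of $\bar{y}_0 \leq 0$ from the $e_i$, and the final rescaling by $-1/(\bar{b}^T\bar{z})$ to normalise $\bar{b}^T\bar{y} = -1$ are all exactly right. For the record, Gale's own derivation in the cited source has a different flavour: he deduces the inhomogeneous theorem of the alternative from the homogeneous Farkas lemma, which is proved by induction on the number of variables (elimination), with no topological input; your separation route trades that combinatorial induction for the closed-cone projection, and your fallback observation---that the classical inhomogeneous Farkas lemma plus a positive rescaling of the certificate already gives the statement---is also valid and is essentially why the paper could cite the result without proof.
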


\begin{theorem}
  Let $A$ be a valued structure and assume that BLP solves VCSP$(A)$.
  Then $P^m(A) \fhom A$ for every $m > 1$.
\end{theorem}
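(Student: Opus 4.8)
The plan is to prove the contrapositive: I assume that for some fixed $m > 1$ there is \emph{no} fractional homomorphism $P^m(A) \fhom A$, and I construct an instance $I$ of VCSP$(A)$ witnessing that BLP fails, i.e.\ an instance with $\opt_A(I) > \mathrm{BLP}(I,A)$. The starting observation is that the condition ``$\omega$ is a fractional homomorphism from $P^m(A)$ to $A$'' is a system of finitely many linear inequalities in the unknowns $\{\omega(g') : g' \in A^{P^m(A)}\}$, together with the normalisation $\sum_{g'}\omega(g') = 1$ and nonnegativity $\omega \geq 0$. Explicitly, for each function symbol $f \in \tau$ and each tuple of multisets $\alpha_1,\dots,\alpha_{ar(f)} \in \multiset{D}{m}$ we have the constraint $\sum_{g'} \omega(g')\, f^A(g'(\alpha_1),\dots,g'(\alpha_{ar(f)})) \leq f^{P^m(A)}(\alpha_1,\dots,\alpha_{ar(f)})$. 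So the nonexistence of a fractional homomorphism is exactly the infeasibility of a finite linear system of the form $A\bar x \leq \bar b$ (where the normalisation $\sum \omega(g') = 1$ is encoded as two opposite inequalities, and $\omega \geq 0$ as further rows).

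First I would set this system up carefully and feed it into the Gale variant of Farkas' Lemma (Lemma~\ref{lem:gale}). Since the system $A\bar x \leq \bar b$ has no solution, the alternative holds: there is a nonnegative vector $\bar y$ with $A^T \bar y = 0$ and $\bar b^T \bar y = -1$. The coordinates of $\bar y$ are dual multipliers attached to the constraints of the system, i.e.\ nonnegative weights $y_{f,\alpha_1,\dots,\alpha_{ar(f)}}$ on the inequality constraints (plus multipliers for the normalisation and nonnegativity rows). The central idea is to \emph{read these dual multipliers as a VCSP$(A)$ instance}: the weighted combination of constraints indexed by $(f,\alpha_1,\dots,\alpha_{ar(f)})$ will define an instance $I$ whose variables correspond to the distinct multisets $\alpha \in \multiset{D}{m}$ occurring, with $f^I$ given (up to scaling) by the multipliers $y$. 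The condition $\bar b^T \bar y = -1$ will translate into a strict gap between the value of the fractional/BLP optimum and the integral optimum.

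The second half of the argument, which I expect to be the main obstacle, is to verify that the instance $I$ extracted from $\bar y$ genuinely separates $\mathrm{BLP}(I,A)$ from $\opt_A(I)$. Concretely I would show two things. On the one hand, the assignment of each VCSP variable (a multiset $\alpha$) to the ``uniform fractional point'' coming from $\alpha$ itself gives a \emph{feasible LP solution} to (\ref{eq:basiclp}) of small value — this is where the multiset-structure $P^m(A)$ is tailor-made, since $f^{P^m(A)}$ is precisely the minimised average cost that the BLP can achieve by splitting each variable's mass over the $m$ elements of its multiset. The value of this LP solution is controlled by the right-hand side $\bar b$, hence by $\bar b^T\bar y$. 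On the other hand, every integral assignment $h$ of the variables of $I$ can be massaged, via the dual equalities $A^T\bar y = 0$, into what would be a fractional homomorphism were the system feasible; the equation $A^T \bar y = 0$ forces the integral cost to be at least the LP value plus the deficit encoded by $\bar b^T \bar y = -1$, yielding $\opt_A(I) \geq \mathrm{BLP}(I,A) + \varepsilon$ for some $\varepsilon > 0$. The delicate points are the bookkeeping of the normalisation and nonnegativity multipliers (so that the resulting $\bar y$ genuinely encodes cost functions $f^I \in \Qnn$ rather than signed weights), and clearing denominators so that $I$ is a bona fide rational-weighted instance; these are the routine but error-prone steps. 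Once the gap $\opt_A(I) > \mathrm{BLP}(I,A)$ is established, it immediately contradicts the hypothesis that BLP solves VCSP$(A)$, completing the contrapositive and hence the proof.
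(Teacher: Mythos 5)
Your proposal is correct and follows essentially the same route as the paper's proof: the contrapositive via Gale's variant of Farkas' Lemma, reading the dual multipliers $y(f,\bar{\alpha})$ as the weights of an instance whose variables are the multisets in $\multiset{D}{m}$, exhibiting a feasible BLP solution of value $\sum_{f,\bar{\alpha}} y(f,\bar{\alpha}) f^{P^m(A)}(\bar{\alpha})$ from the multiset structure, and using the dual equations to force every integral assignment $g \in \Omega$ to cost strictly more. The ``delicate points'' you flag (absorbing the normalisation and nonnegativity multipliers, and rationality of the weights) are handled in the paper exactly as you anticipate, in a few lines of substitution.
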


\begin{proof}
  Let $\tau$ be the signature of $A$ and let $D = D(A)$.
  We prove the contrapositive.
  Assume that there is an integer $m > 1$ such that $P^m(A)$ does not have a fractional homomorphism to $A$.
  Let $\Omega$ denote the set of functions from $\multiset{D}{m}$ to $D$.
  We are assuming that the following system of inequalities does not have a solution $\omega : \Omega \rightarrow \Qnn$.
  \begin{eqnarray*}
    \sum_{g \in \Omega} \omega(g) f^A(g(\bar{\alpha})) & \leq &
    f^{P^m(A)}(\bar{\alpha})
    \quad \text{ $\forall f \in \tau, \bar{\alpha} \in \textstyle \multiset{D}{m}^{ar(f)}$}\\ %
    \sum_{g \in \Omega} \omega(g) & = & 1 \\
    \omega(g) & \geq & 0 \quad \text{ $\forall g \in \Omega$.}
  \end{eqnarray*}

  In order to apply Lemma~\ref{lem:gale}, we rewrite the equality $\sum_{g} \omega(g) = 1$ into two inequalities $\sum_{g} \omega(g) \leq 1$ and $-\sum_{g} \omega(g) \leq -1$.
  The last set of inequalities are rewritten to the form $-\omega(g) \leq 0$ for each $g \in \Omega$.
  We have one variable for each inequality, i.e., $y(f,\bar{\alpha})$ for $f \in \tau$, and $\bar{\alpha} \in \multiset{D}{m}^{ar(f)}$.
  Additionally, we have two variables $z_+, z_-$ for the two inequalities involving the constant 1 and one variable $w(g)$ for each $g \in \Omega$.
  \begin{eqnarray*}
    \sum_{f,\bar{\alpha}} y(f,\bar{\alpha}) f^A(g(\bar{\alpha})) + z_+ - z_- - w(g) & = & 0\\
    \sum_{f,\bar{\alpha}} y(f,\bar{\alpha}) f^{P^m(A)}(\bar{\alpha}) + z_+ - z_- & = & -1 \\
    y, z_+, z_-, w & \geq & 0
  \end{eqnarray*}
  
  We can isolate $z_++z_-$ in the last equality, 
  \[
  z_++z_- = -1 - \sum_{f,\bar{\alpha}} y(f,\bar{\alpha}) f^{P^m(A)}(\bar{\alpha}),
  \]
  which substituted into the first set of equalities implies that there is a solution $y(f,\bar{\alpha}), w(g)$ such that, for each $g \in \Omega$,
  \[
  \sum_{f,\bar{\alpha}} y(f,\bar{\alpha}) f^A(g(\bar{\alpha})) = w(g) + 1 + \sum_{f,\bar{\alpha}} y(f,\bar{\alpha}) f^{P^m(A)}(\bar{\alpha}).
  \]

  We therefore find that there is a solution to the following system:
  \begin{eqnarray}\label{eq:fracsol}
    \sum_{f,\bar{\alpha}} y(f,\bar{\alpha}) f^A(g(\bar{\alpha})) & > & \sum_{f,\bar{\alpha}} y(f,\bar{\alpha}) f^{P^m(A)}(\bar{\alpha}) 
    \quad \text{ $\forall g \in \Omega$} \notag \\
    y(f,\bar{\alpha}) & \geq & 0 \quad \text{ $\forall f, \bar{\alpha}$.} \notag
  \end{eqnarray}

  Let $I$ be the instance on variables $\multiset{D}{m}$.
  For each $k$-ary function symbol $f \in \tau$, and $\bar{\alpha} \in \multiset{D}{m}^{ar(f)}$, define
  \[
  f^I(\bar{\alpha}) = y(f,\bar{\alpha}).
  \]

  We now give a solution $\lambda, \mu$ to the basic LP~(\ref{eq:basiclp}) with an objective value equal to the right-hand side of (\ref{eq:fracsol}).
  Each variable $\mu_{\alpha}(a)$ is assigned the value of the multiplicity of $a$ in $\alpha$ divided by $m$.
  Given $f, \bar{\alpha}$,
  let ${\bar t_1},\dots,{\bar t_k} \in D^m$ be such that $f^{P^m(A)}(\bar{\alpha}) = \frac{1}{m} \sum_{i=1}^m f^A(\bar{t}_1[i],\dots,\bar{t}_k[i])$,
  and assign values to the $\lambda$-variables as follows:
  \[
  \lambda_{f,\bar{\alpha},\sigma} =
  \frac{1}{m} |\{ i \mid \sigma(\bar{\alpha}[j]) = \bar{t}_j[i] \text{ for all $j$} \}|
  \]

  Note that $\sum_{\sigma : \sigma(\bar{\alpha}[j]) = a} \lambda_{f,\bar{\alpha},\sigma} = \mu_{\bar{\alpha}[j]}(a)$ for all $1 \leq j \leq k$ and $a \in D$.
  Furthermore, $\lambda$ is defined so that
  $f^{P^{m}(A)}(\bar{\alpha}) = \sum_{\sigma : \{\bar{\alpha}\} \rightarrow D} f^A(\sigma(\bar{\alpha})) \lambda_{f,\bar{\alpha},\sigma}$.
  Hence, the variables $\lambda, \mu$ satisfy the basic LP~(\ref{eq:basiclp}),
  and we have
  \begin{equation}\label{eq:altfracsol}
    BLP(I,A) \leq 
    \sum_{f,\bar{\alpha}} f^I(\bar{\alpha}) \sum_{\sigma : \{\bar{\alpha}\} \rightarrow D} f^A(\sigma(\bar{\alpha})) \lambda_{f,\bar{\alpha},\sigma}
    = 
    \sum_{f,\bar{\alpha}} f^I(\bar{\alpha}) f^{P^m(A)}(\bar{\alpha}),
  \end{equation}
  where the sums are over $f \in \tau$ and $\bar{\alpha} \in \multiset{D}{m}^{ar(f)}$.

  It now follows from (\ref{eq:fracsol}) and (\ref{eq:altfracsol}) that the measure of any solution $g : \multiset{D}{m} \rightarrow D$ to $I$ is strictly greater than BLP$(I,A)$.
  Consequently,
  BLP does not solve VCSP$(A)$.
\end{proof}

\section{Optimal Soft Arc Consistency}\label{sec:osac}

In this section we define optimal soft arc consistency, which is closely related
to BLP given by~(\ref{eq:basiclp}) in
Section~\ref{sec:blp}.

Let $I$ and $A$ be valued structures over a common finite signature.
Let $X = D(I)$ and $D = D(A)$.
We will group the terms of an instance with respect to their \emph{scope}.
Let $S \subseteq X$.
The terms of this scope are those of the form $f^I({\bar x})f^A(\sigma({\bar x}))$,
where $\{{\bar x}\} = S$,
and $\text{ar}(f) = |{\bar x}|$.
For each scope $S$, $x \in S$, and $\sigma : S \rightarrow D$, we have a variable $y_{S,x}(\sigma(x))$.
For each $x \in X$, we have a variable $z_x$.

Establishing \emph{optimal soft arc consistency} (OSAC) amounts to solving the
following linear program~\cite{Cooper10:osac}:
\begin{equation}\label{eq:lposac}
  \begin{array}{lll}
    \max
& \displaystyle \sum_{x} z_x & \\
    \text{s.t.}
&
    \displaystyle \sum_{\{{\bar x}\} = S, f} f^I({\bar x}) f^A(\sigma({\bar x})) - \displaystyle \sum_{x \in S} y_{S,x}(\sigma(x)) \geq 0 
&
\qquad   \text{ $\forall S \subseteq X, \sigma : S \rightarrow D$} \\
&
\hspace*{0.4em}\displaystyle \sum_{u} u^I(x) u^A(\sigma(x)) - z_x + \displaystyle \sum_{S : x \in S} y_{S,x}(\sigma(x)) \geq 0
& 
    \qquad \text{ $\forall x \in X, \sigma : \{x\} \rightarrow D$} \\
  \end{array}
\end{equation}

We refer the reader to~\cite{Cooper10:osac} for more details, but the idea
behind~(\ref{eq:lposac}) is that it gives the maximum lower bound on $\opt_A(I)$
among all arc-consistency closures of the given instance $I$, where the closure
is obtained by repeated calls of three basic operations called Extend, Project,
and  UnaryProject.

We will be interested in the dual of (\ref{eq:lposac}).
The dual has variables $\lambda_{S,\sigma}$ for $S \subseteq X$ and $\sigma : S \rightarrow D$,
and variables $\mu_{x}(a)$ for $x \in X, a \in D$.

\begin{equation}\label{eq:dual}
  \begin{array}{lll}
    \min
& \multicolumn{2}{l}{\displaystyle \sum_{S \subseteq X,\sigma} \Big( \sum_{\{{\bar x}\} = S, f} f^I({\bar x}) f^A(\sigma({\bar x})) \Big) \lambda_{S,\sigma} +
  \displaystyle \sum_{x \in X, \sigma} \Big( \sum_{u} u^I(x) u^A(\sigma(x))
  \Big) \mu_{x}(\sigma(x))}  \\
    \text{s.t.}
& \displaystyle\sum_{\sigma : \sigma(x) = a} \lambda_{S,\sigma} = \mu_{x}(a) 
& \qquad \text{ $\forall S \subseteq X, x \in S, a \in D$} \\
& \hspace*{0.8em} \displaystyle\sum_{a \in D} \mu_{x}(a) = 1 
& \qquad \text{ $\forall x \in X$} \\
\smallskip & 
\quad\lambda, \mu \geq 0 & \\
  \end{array}
\end{equation}

Note that~(\ref{eq:dual}) is a tighter relaxation than~(\ref{eq:basiclp}) as it
has only one variable $\lambda$ for all constraints with the same scope (seen as
a set) of variables. In~(\ref{eq:basiclp}), different constraints have different
variables $\lambda$ even if the scopes (seen as sets) are the same.
Consequently, OSAC solves all problems solved by BLP.
Moreover, since the basic SDP relaxation of a VCSP$(A)$
instance is tighter than BLP~\cite{Raghavendra08:stoc}, 
the basic SDP relaxation also solves all tractable cases identified in this paper.

\end{document}